\documentclass[12pt]{article}
\textheight 23.4cm
\textwidth 14.65cm
\oddsidemargin 0.375in
\evensidemargin 0.375in
\topmargin  -0.55in

\usepackage{url}

\usepackage{booktabs} 
\usepackage{amsthm, graphicx,times,psfrag,xspace,latexsym,amsmath,amssymb}
\usepackage{algorithm2e}

\newtheorem{claim}{Claim}

\begin{document}
\title{Heterogeneous MacroTasking (HeMT) for \\
Parallel Processing in the Public Cloud}

\author{
\begin{tabular}{cc}
Y. Shan, G. Kesidis, B. Urgaonkar & J. Schad\\
Pennsylvania State University & Mesosphere \\
University Park, PA & San Francisco, CA \\
\{yxs182,gik2,buu1\}@psu.edu &  joerg@mesosphere.io \\
 & \\
 J. Khamse-Ashari, I. Lambadaris & \\
 Carleton University & \\
 Ottawa, Canada & \\
\{jalalkhamseashari, ioannis\}@sce.carleton.ca & 
\end{tabular}
}
\maketitle

\noindent {\bf Abstract - }
Using tiny, equal-sized tasks (Homogeneous microTasking, HomT) has long been
regarded an effective way of load balancing in parallel computing systems.
When combined with nodes pulling in work upon becoming idle, HomT has the desirable
property of automatically adapting its load distribution to the processing capacities of
participating nodes - more powerful nodes finish their work sooner and, therefore, pull in
additional work faster. As a result, HomT is deemed especially desirable in settings with
heterogeneous (and possibly possessing dynamically changing) processing capacities.
However, HomT does have additional scheduling and I/O overheads 
that might make this load balancing scheme costly in some scenarios. In this paper,
we first analyze these advantages and disadvantages of HomT. We then propose an alternative
load balancing scheme -  Heterogeneous MacroTasking (HeMT), wherein workload is
{\em intentionally} partitioned according to nodes' processing capacity. 
Our goal is to study when HeMT is able to overcome the performance disadvantages of HomT.
We implement a prototype of HeMT within the Apache Spark application framework with
complementary enhancements to the Apache Mesos cluster manager. Spark's built-in scheduler,
when parameterized appropriately, implements HomT. Our experimental results show that HeMT
outperforms HomT when accurate workload-specific estimates of nodes' processing capacities
are learned. As representative results, Spark with HeMT offers 
about 10\% better average
completion times for realistic data processing workloads over the default system.

\section{Introduction}

Parallel data processing represents a large and important class of workloads running on
public cloud computing platforms such as Amazon Elastic Computing Cloud (EC2), Google Compute
Engine (GCE), and Microsoft Azure. Load balancing - dividing work among the nodes 
of a cluster -  has an important role in determining the performance of these workloads.
The clusters underlying such workloads are often prone to exhibiting {\em heterogeneity} in
the processing capacities of their constituent nodes. Such heterogeneity may have
negative implications for workload performance (such as completion times) and cost incurred
by the workload owner or ``tenant'' (due to resource wastage in under-utilized nodes). 

Heterogeneity may arise from myriad causes including (i) an intentional procurement of such nodes
by the tenant (mandated by some other requirements) or (ii) time-varying resource interference
within the cloud infrastructure. Budget-conscious tenants (such as academic researchers
operating on limited budgets but also fledgling startups operating in a cut-throat market)
are especially likely to contend with heterogeneity due to (ii). 
Such tenants often procure relatively cheap 
virtual machines (VMs) such as spot or  burstable or small-sized regular instances. These
``wimpy'' instances are created by providers to monetize their dynamic spare capacity and are
managed using aggressive statistical multiplexing/overbooking of physical resources. As a
result, such instances are known to exhibit significant dynamism and unpredictability in
their effective capacities~\cite{sigmetrics17,EuroSys17}.

{\bf The Problem:} These heterogeneous settings  exacerbate the well-known
``straggler problem''~\cite{mantri}, wherein a small subset of slow tasks stall an entire
parallel computation that they are part of by causing a
synchronization delay at a program barrier (i.e., {\em all} parallel tasks need to  complete
before the program can proceed and end up waiting for the slow task(s)).
While we offer a broad survey of salient solutions/ideas Section~\ref{sec:related}, our
specific interest in this paper lies in the following: {\em what are the relative pros and
  cons of approaches based on fine-grained vs. coarse-grained partitioning
  in overcoming heterogeneity-induced slowdowns?}

As a prominent representative of the
former approach, consider \cite{Ousterhout13} which advocates that parallel jobs should be
divided into relatively small-sized tasks (``microtasks'') via fine
homogeneous partitioning of the input dataset on which processing is being
performed\footnote{Specifically, \cite{Ousterhout13} suggests microtasks take on the order
  of $100$ ms to execute on contemporary systems.}. Microtasking can lead to good
load balancing when combined with a ``pull-based'' operation: when underbooked or idle,
nodes pull work  (tasks) from a pending queue so faster workers simply pull in more work. 
Furthermore, the synchronization delays are reduced without needing knowledge of either
the speed of the nodes 
or the  resources required to achieve particular execution-times
for the tasks. Because of this property, we refer to such approaches as {\em oblivious}
load balancing. However, there also exist studies, e.g., \cite{against-tiny}, that
challenge the microtasking idea, pointing out that the relatively large overhead of
microtasking can, in some cases,  significantly slow down computation. 

\noindent {\bf Research Approach and Contributions: } To explore the microtasking vs. macrotasking trade-off,  we 
  leverage initial modifications we have made to Apache Spark~\cite{spark-the-paper},  a popular
  parallel data processing application framework. 
  Our modifications have been designed with the goal of  
  enabling a more intelligent, cost-conscious tenant to use cloud resources more 
efficiently.
Being cost-conscious, such a tenant may have selected instances
that just meet its requirements (rather than more expensive instances whose capacity
is likely to go idle occasionally). This may involve
custom VMs/containers as in Resources as a Service (RaaS) \cite{raas}, 
less expensive spot/revocable instances, 
burstable/bursting instances (with only intermittently available resources),
or ``wimpy"  regular instances 
that are allocated, e.g., only a fraction of a core.
Furthermore, such a tenant may have characterized its
workload's resource needs to achieve certain performance goals (``demand-side'' characterization).  
Finally,  such a tenant may also employ suitable prediction and monitoring techniques - we
present some examples in 
Sec. \ref{sec:adapted-hemt} and \ref{sec:HeMT}
 - that let it estimate the (relative)
effective capacities of its cluster's nodes in the near future (``supply-side''
characterization).  Using homogeneous microtasking
as a reference, we investigate how/when the straggler problem can be mitigated
through heterogeneous partitioning using such demand- and supply-side characterization. 

The contribution of this paper is fourfold.
\begin{itemize}
\item
  Although microtasking can provide certain qualities-of-service  and efficiencies without detailed information about the cluster or workload, its usefulness may be hindered by its overhead. We interpret part of this overhead via a simple analytical model. 
\item
  We consider variants of heterogeneous macrotasking (HeMT) corresponding to different degrees of accuracy/certainty in
  supply/demand characterization ranging from an oblivious, incrementally-adjusted HeMT to
  a more sophisticated version where offline/online knowledge of node capacities 
is also leveraged.
\item
  Using a variety of experiments on our Spark/Mesos prototype on Amazon EC2 with different
  workloads and nodes (ranging from regular to burstable EC2 instances), we show the
  efficacy of our ideas. Our final set of experiments employ two important multi-stage
  workloads, i.e., PageRank and K-Means.
\item
  We identify and suggest several interesting directions for future work, especially
  related to adaptive scheduling across the application frameworks and cluster manager layers, including improved information exchange between them via enhanced APIs. 

\end{itemize}

\noindent {\bf Outline:} This rest of this paper is organized as follows. 
We overview the Spark application framework (and Mesos
cluster manager) in Sec. \ref{sec:bg}.
In Sec. \ref{sec:homt}, we discuss (homogeneous) microtasking in detail.
Heterogeneous macrotasking is introduced in Sec. \ref{sec:bg-hemt}.
In Sec. \ref{sec:adapted-hemt}, we study a simple
``oblivious" approach to online adapting the size of heterogeneous
macrotasks based on synchronization delays (variations
in task execution times) at program barriers.
Heterogeneous macrotasking for workers based on  statically
provisioned containers or on burstable instances is studied in 
Sec. \ref{sec:HeMT}.
Multistage workloads are considered in Sec. \ref{sec:multistage}.
Other related work is discussed in Sec. \ref{sec:related}. 
After a discussion of 
future work in  Sec. \ref{sec:fw}, we
conclude in Sec.  \ref{sec:conclus}.

\section{Overview of the Application Framework Spark and 
the Cluster Manager Mesos}\label{sec:bg}

\begin{figure}[ht!]
	\centering
        \includegraphics[width=0.8\columnwidth]{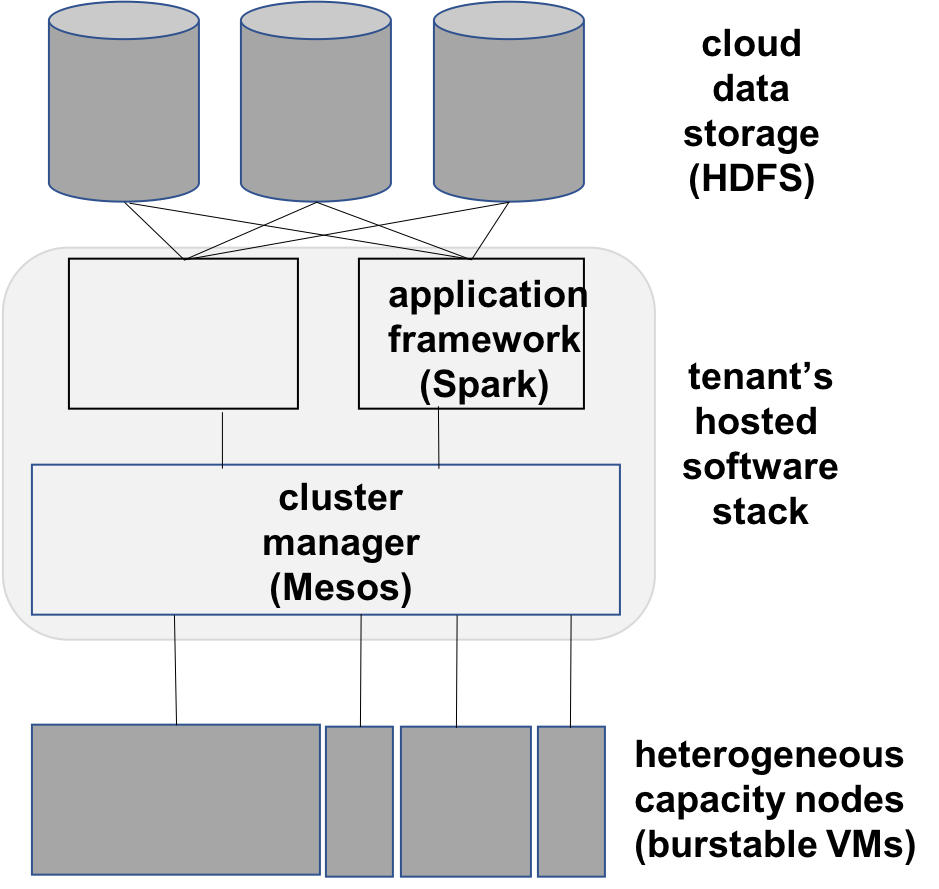}
	\caption{Overview of a cloud tenant under consideration hosting multiple 
          application frameworks and a cluster manager. Application frameworks run their computations using the virtual machine resources allocated to them by the tenant's cluster manager and use a cloud-based facility for persisting their data. The virtual machines procured from the cloud may be heterogeneous, setting up the challenges investigated in this paper.  }
	\label{fig:overall-arch}
\end{figure}

The architecture of a  typical cloud tenant is shown in Fig. \ref{fig:overall-arch}. The tenant's users first register their jobs (application frameworks) with the cluster manager. The cluster manager will then allocate resources from the managed (virtual or physical machines) to the users' jobs so that they can run in a distributed fashion.
In this paper, the example cluster manager is based on Apache Mesos
and the distributed application frameworks are based on
Apache Spark.

Apache Mesos \cite{mesos} is a widely-used cluster manager. To start a Mesos cluster, a Mesos master process has to be started and Mesos agent processes, responsible for reporting available resources to the master, need to run on each resource-providing machine (often referred as ``computing node'' or simply ``node'' in the following) and register with the master. The Mesos master informs the registered frameworks about resource availabilities through resource offers. Upon receiving such an offer, a framework decides how much resources it will use and informs the Mesos master.

Apache Spark \cite{spark-the-paper} is a distributed computation framework which we run over Mesos. The Spark driver, running on either cluster's machine or user's own machine, acts as a centralized job orchestrator. It divides a job into multiple computation stages. The mutually dependent stages are separated by data shuffling. A stage can only start when all its dependent stages have been completed. Each stage may contain multiple parallel tasks. To run those tasks, the Spark driver first acquires resources from its cluster (via Mesos) to launch executors (task runners) on the distributed nodes. Without considering any task placement constraints such as locality preferences or blacklists, the Spark driver sends the pending tasks to the idling executors.

Spark's parallelism, i.e., how many parallel tasks can run on a single stage, by default depends on the number of computing slots (e.g., the number of CPU cores on those spawned executors) or the distribution of the input data (e.g., how many blocks in the input data file). Users can configure their preferred parallelism if they want more tasks, each processing less workload, or fewer tasks, each processing more workload. 
Spark tasks are sized equally or according to data locality, without considering the processing speed of executors.

\section{Homogeneous Microtasking (HomT)}\label{sec:homt}

Homogeneous microtasking, i.e., equally dividing a job into tiny tasks which greatly out-number the number of computing slots, can well balance the workload without accurately knowing the computation speed of the computing nodes in the cluster 
\cite{Ousterhout13, lb-mapreduce}. This is supported
by the upper-bound on resource idling time stated in the following claim.

\begin{claim}
	For pull-based task assignment (i.e., an executor pulls one more task if it is done with its assigned task and there are more tasks remaining), suppose all the tasks in a stage are pending at some initial time 0, the workload is evenly partitioned into tasks and the speed of the nodes is constant. Then the resource idling time (latest node finish time minus earliest node finish time) is upper-bounded by the single task duration of the slowest node.
\end{claim}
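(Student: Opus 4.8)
The plan is to introduce a single time instant that separates the ``always-busy'' phase of the computation from the ``draining'' phase, and then bound each of the two finish times appearing in the idling time against it. Since all tasks in the stage carry equal work and each node $i$ runs at a constant speed, the time node $i$ needs to execute one task is a fixed quantity $\tau_i$; let $\tau_{\max}=\max_i\tau_i$ be the single-task duration of the slowest node. Let $f_i$ be the time node $i$ completes its last task, so the quantity to be bounded is $\max_i f_i-\min_i f_i$. Finally, let $T$ be the instant at which the pending queue becomes empty, i.e., the moment the last task is pulled by some node.

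The first and main step is a lemma: under pull-based assignment with all tasks available at time $0$, every participating node is continuously busy (executing some task) throughout $[0,T)$. I would argue this by contradiction: if node $i$ were idle at some $t<T$, it must have completed its previous task at a time $t'\le t<T$ at which the queue was still nonempty, so by the pull rule it would have immediately fetched another task, contradicting its idleness. An immediate corollary is that no node can finish before $T$, hence $\min_i f_i\ge T$.

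The second step bounds the largest finish time. After time $T$ the queue is empty, so no node pulls a new task after $T$; therefore the last task executed by any node $i$ was started no later than $T$, and since its duration is exactly $\tau_i\le\tau_{\max}$ we get $f_i\le T+\tau_{\max}$ for every $i$, hence $\max_i f_i\le T+\tau_{\max}$. Combining the two bounds gives $\max_i f_i-\min_i f_i\le(T+\tau_{\max})-T=\tau_{\max}$, which is the claim.

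The only delicate point is the treatment of simultaneity --- a node finishing exactly at $T$, or two idle nodes contending for the last queued task at the same instant. These ties do not affect the argument because all the inequalities above are non-strict; one just has to define $T$ as the moment the queue empties (equivalently, the last pull) rather than through any particular node's behavior. It is also worth noting where each hypothesis enters: ``all tasks pending at time $0$'' is exactly what makes the always-busy lemma hold (with staggered arrivals a node could idle while waiting for work), and ``equal tasks plus constant speed'' is what makes $\tau_i$, and hence $\tau_{\max}$, well defined and independent of when a task is started.
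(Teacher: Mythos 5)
Your proof is correct and follows essentially the same argument as the paper: both hinge on the observation that the task queue is empty by the time the first node goes idle, so every node's final task starts no later than that reference instant and hence finishes within one slowest-node task duration of it. The only cosmetic difference is that you anchor on the queue-emptying time $T$ and prove $T\le\min_i f_i$ explicitly, whereas the paper anchors directly on the earliest finish time $t_0$ and writes the other nodes' last-task start times as $t_0-\Delta t_i$ with $\Delta t_i\ge 0$.
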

\begin{proof}
	Let $T_i$ be the task duration on node $i$, $t_i$ be the finish time of node $i$ (the time when node $i$ finishes its last task). Assume node $0$ is the first node to be idle. Note that $\forall \epsilon > 0$, at $t_0 + \epsilon$, there is no pending task\footnote{Otherwise, node $0$ would pull another pending task to run.} and all the remaining tasks are running on the other nodes. Let $t_0 - \Delta t_i, i \neq 0$ be the starting time of the running task on node $i$, so the finish time of the last task would be $\max_i\{t_0 -\Delta t_i + T_i\} = t_0 + \max_i\{T_i - \Delta t_i\}$. Note that $\Delta t_i \geq 0, \forall i, i \neq 0$. Therefore,  $t_0 + \max_i\{T_i - \Delta t_i\} \leq t_0 + \max_i\{T_i\}$.
\end{proof}

However, the advantages of homogeneous microtasking are limited under several scenarios.

First, the scheduling overhead grows with the number of tasks. This problem makes microtasking approach less practical in some distributed computing frameworks such as Apache Hadoop whose tasks could take up to several seconds to launch \cite{Ousterhout13}.

Second, as discussed in previous work \cite{riffle, Ousterhout13, flat-dc}, one of the major concerns of dividing a job into tiny tasks is disk I/O inefficiency. When a large task is further divided into many tiny ones, one original sequential disk read would be divided into many small random disk reads, hence leading to relatively lower I/O throughput. Also, if a task input is small, then depending on the reading buffer (normally ranging from kBs to MBs), the task may finish after only a couple of I/O requests, so the advantage of pipelined read-process may no longer exist. So increase in stage completion times result,  
as shown in
Figs. \ref{container04}
and \ref{100bw}-\ref{40bw} 
in Sec. \ref{sec:containers} and \ref{sec:burstables}, respectively.

Another problem of microtasking we found in our experiments 
is related to the distributed file system, e.g., 
the Hadoop Distributed File System (HDFS)
\cite{hdfs-design}: when the job is network I/O bottlenecked, if multiple tasks simultaneously access the same HDFS data block, then they are more likely to read on the same datanode, which may lead to inefficient overall CPU and network bandwidth usage in the HDFS cluster. On the other hand, Spark sequentially schedules tasks so that
consecutive tasks are more likely to access the same block as large tasks are 
increasingly divided into smaller ones.

HDFS is designed to store very large files in a distributed fashion. It follows a master/slave architecture. Namenode, the master, manages the file system metadata and coordinates the data placement. Datanodes serve as slaves that  perform the actual data reads and writes. To operate on HDFS, a client first contacts the namenode, which will redirect the client to the correct datanodes for write and read. The HDFS architecture is illustrated in Fig. \ref{hdfs-arch}.

 \begin{figure}[ht!]
	\centering
	\includegraphics[scale=0.28]{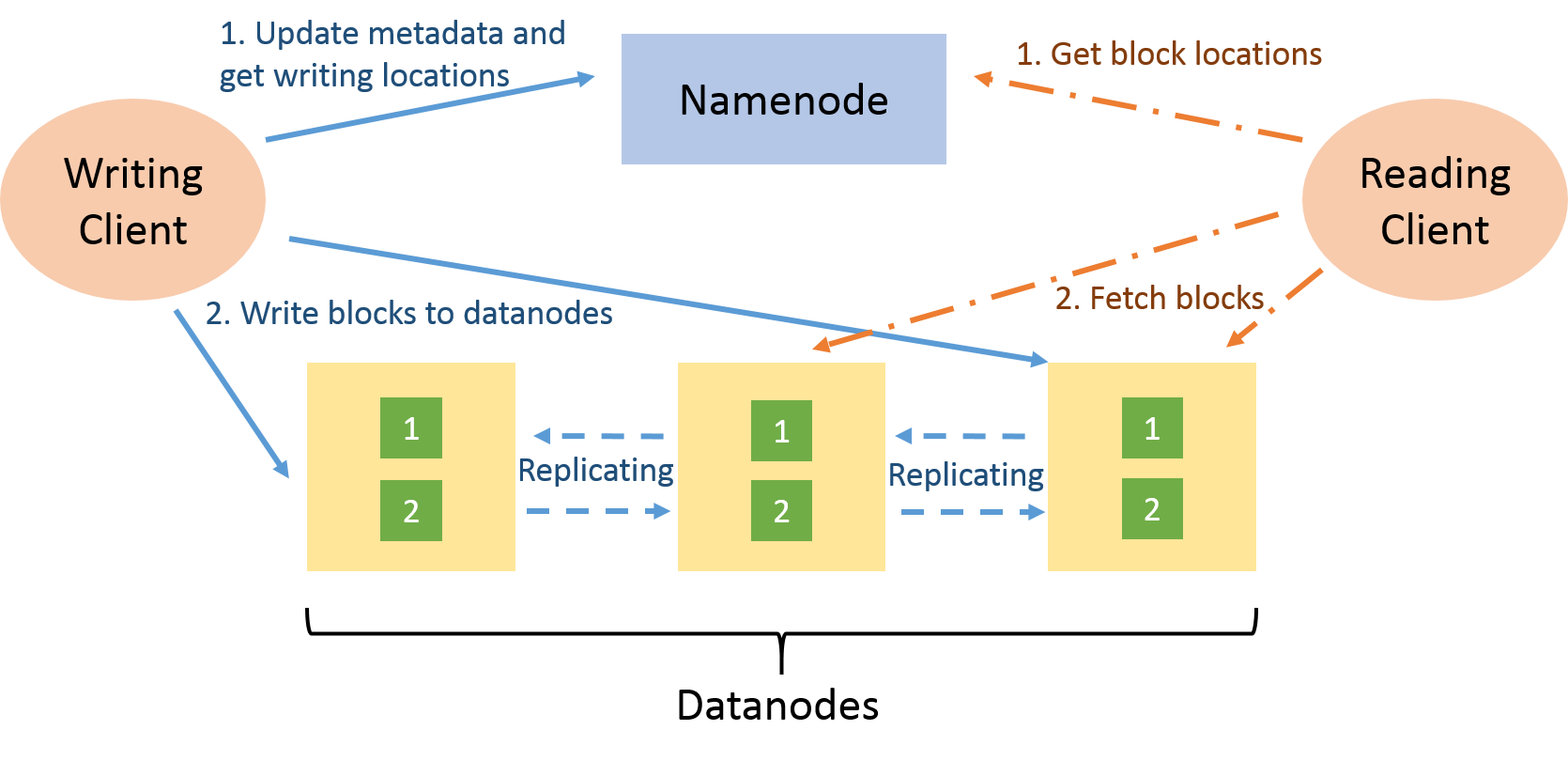}
	\caption{HDFS architecture.}
	\label{hdfs-arch}
\end{figure}

Each file is usually divided into a sequence of blocks, and each block is replicated for fault tolerance. HDFS does not allow a single datanode to store multiple replicas for the same block \cite{hdfs-design}. 
To simplify our analysis, we make two assumptions: First, rack-awareness \cite{hdfs-rack-awareness} in block placement is turned off\footnote{In fact, Hadoop rack-awareness has  less randomness and thus intensifies uplink competition
since data blocks are less broadly spread.}. 
Second, a simple placement policy is assumed such that 
the namenode randomly chooses one among the equally-distant datanodes to place a data block as well as its replicas. So when a remote user uploads a data block, HDFS randomly chooses datanodes, each storing exactly one replica of that block. Upon a read request, HDFS tries to choose the closest replica (on the same datanode or on the same rack) to the reading client. If there are multiple candidate replicas, then the replica choice is distributed 
uniformly at random among them.
One typical HDFS replica distribution is shown in Fig. \ref{hdfs-replica-distr}.

\begin{figure}[ht!]
	\centering
	\includegraphics[scale=0.28]{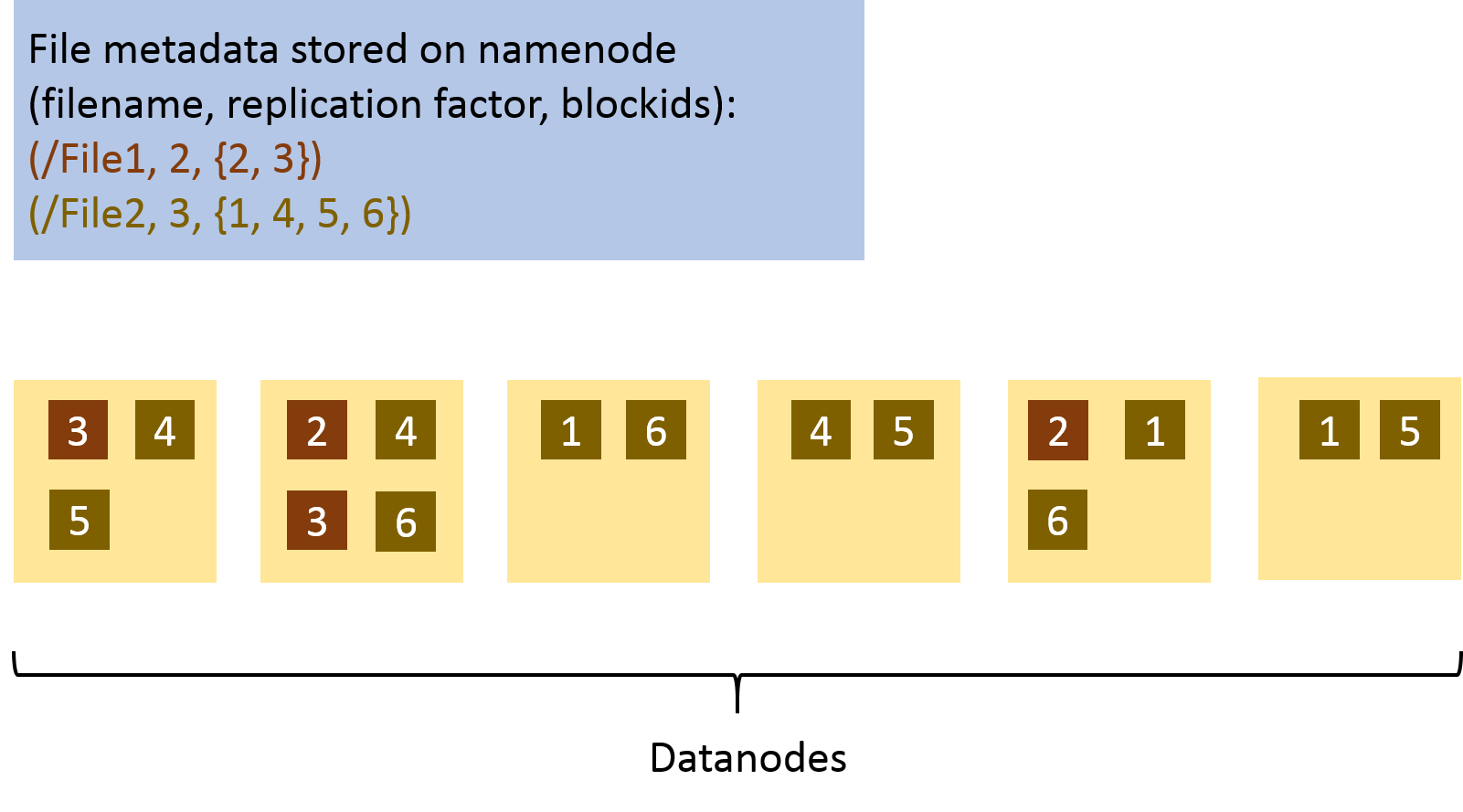}
	\caption{An illustrative example of HDFS replica distribution.}
	\label{hdfs-replica-distr}
\end{figure}

In a HDFS cluster, suppose the number of datanodes is $n$ and the replica factor is $r$, and $n \geq r$ (the usual case). If two tasks access the same HDFS block, then the probability that they will read on the same datanode, competing for its uplink bandwidth\footnote{They may also compete for the disk bandwidth on that datanode. But considering disk bandwidth is usually larger than the network bandwidth, disk bandwidth is not the concern herein.} is
\begin{equation}\label{expr_p1}
p_1 = 1/r.
\end{equation}

If two tasks access different HDFS blocks, the probability that they will read on the same datanode is
\begin{equation}\label{expr_p2}
p_2 = \sum_{v = \max(2r - n, 0)}^{r} \textrm{P}(v) \frac{v}{r^2},
\end{equation}
where $\textrm{P}(v)$ is the probability that there are $v$ datanodes that store replicas of both blocks. Assume HDFS randomly choose $r$ datanodes to store replicas when the data is uploaded (appears so, need to verify if it is true in HDFS implementation),
\begin{equation}\label{expr_pv}
\textrm{P}(v) = \frac{\binom{r}{v} \binom{n - r}{r - v}}{\binom{n}{r}}.
\end{equation}

Considering Eq. (\ref{expr_pv}) and (\ref{expr_p1}), we can

\begin{claim}
\begin{eqnarray*}
 p_1 & \geq &  p_2,
\end{eqnarray*}
with equality when $r = n$.
\end{claim}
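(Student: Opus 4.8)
The plan is to recognize the weights $\textrm{P}(v)$ in \eqref{expr_pv} as a hypergeometric distribution and to reduce the inequality to a single computation of its mean. Concretely, $\textrm{P}(v)$ is exactly the probability that two independent uniformly random $r$-subsets of an $n$-element set (the two sets of datanodes holding the replicas of the two blocks) intersect in $v$ elements. Hence $p_2 = \tfrac{1}{r^2}\sum_v v\,\textrm{P}(v) = \tfrac{1}{r^2}\mathbb{E}[v]$, and the whole claim will follow once we show $\mathbb{E}[v] = r^2/n$, since then $p_2 = 1/n \le 1/r = p_1$, with equality precisely when $n=r$ by \eqref{expr_p1}.

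First I would establish $\mathbb{E}[v]=r^2/n$. The clean route is the combinatorial identity $\sum_{v} v\binom{r}{v}\binom{n-r}{r-v} = r\binom{n-1}{r-1}$, obtained from $v\binom{r}{v}=r\binom{r-1}{v-1}$ followed by the Vandermonde convolution $\sum_{v}\binom{r-1}{v-1}\binom{n-r}{r-v}=\binom{n-1}{r-1}$. Dividing by $\binom{n}{r}$ and using $\binom{n-1}{r-1}/\binom{n}{r} = r/n$ gives $\mathbb{E}[v] = r\cdot\frac{r}{n}=\frac{r^2}{n}$. (An alternative, essentially bookkeeping-free, argument writes $v=\sum_{j=1}^{n}\mathbf{1}\{j\in A\cap B\}$ for the two random sets $A,B$ and uses linearity of expectation with $\Pr[j\in A\cap B]=(r/n)^2$; one must only note that the support constraint $v\ge\max(2r-n,0)$ is automatically respected and does not truncate the sum.)

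Then I would close the argument: $p_2 = \tfrac{1}{r^2}\cdot\tfrac{r^2}{n} = \tfrac{1}{n}$, and since $n\ge r$ we get $p_1 = \tfrac{1}{r}\ge \tfrac{1}{n} = p_2$, with equality iff $n=r$. The main obstacle is essentially cosmetic rather than mathematical: making sure the identification of $\textrm{P}(v)$ with the hypergeometric law is justified under the paper's placement model (uniformly random $r$-subsets, at most one replica per datanode), and handling the lower summation limit $\max(2r-n,0)$ cleanly — but since $\textrm{P}(v)=0$ outside $[\max(2r-n,0),r]$ anyway, extending the sum over all $v$ is harmless and removes any edge-case fuss. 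No deeper difficulty is expected.
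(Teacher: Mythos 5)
Your proof is correct, but it takes a genuinely different (and sharper) route than the paper's. The paper argues termwise: it rewrites $p_1 \geq p_2$ as $\sum_v \textrm{P}(v)\,\frac{v}{r} \leq 1$ and observes that this holds because each factor $\frac{v}{r} \leq 1$ while the hypergeometric weights $\textrm{P}(v)$ sum to $1$ --- a two-line bound that never evaluates the sum. You instead recognize $\textrm{P}(v)$ as the hypergeometric law of $|A\cap B|$ for two independent uniform $r$-subsets of $[n]$, compute its mean $\mathbb{E}[v]=r^2/n$ (via $v\binom{r}{v}=r\binom{r-1}{v-1}$ and Vandermonde, or by indicator variables), and conclude the exact value $p_2 = 1/n$. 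This buys you strictly more than the paper proves: a closed form for $p_2$, the clean comparison $1/r \geq 1/n$ under the standing assumption $n \geq r$, and --- importantly --- a transparent proof of the equality case $r=n$, which the paper's termwise argument does not actually address (one would have to separately note that when $r=n$ the sum collapses to the single term $v=r$). Your handling of the summation limits is also sound, since $\textrm{P}(v)$ vanishes outside $[\max(2r-n,0),\,r]$. The only cost of your approach is the extra combinatorial identity, which is standard.
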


\begin{proof}
	With Eq. \ref{expr_pv},
	\begin{eqnarray}
	p_1 \geq p_2 &\Leftrightarrow& \sum_{v = \max(2r - n, 0)}^{r} \frac{\binom{r}{v} \binom{n - r}{r - v}}{\binom{n}{r}} \frac{v}{r^2} \leq \frac{1}{r} \nonumber\\
	&\Leftrightarrow& \sum_{v = \max(2r - n, 0)}^{r} \frac{\binom{r}{v} \binom{n - r}{r - v}}{\binom{n}{r}} \frac{v}{r} \leq 1 \label{p1p2_prov}
	\end{eqnarray}
	The inequality in (\ref{p1p2_prov}) obviously holds by noting that within the summation in (\ref{p1p2_prov}), $\frac{v}{r} \leq 1$, and
	\begin{equation*}
	\sum_{v = \max(2r - n, 0)}^{r} \frac{\binom{r}{v} \binom{n - r}{r - v}}{\binom{n}{r}} = 1.
	\end{equation*}
\end{proof}

The plot of $p_1$ and $p_2$ versus $n$ for $r = 2$ 
shown in Fig. \ref{p1p2_2dn}
numerically supports the above conclusion that $p_1 \geq p_2$. That is, they indicate that two tasks reading on the same block are more likely to compete for the uplink bandwidth on the same datanode.

\begin{figure}[ht!]
	\centering
	\includegraphics[scale=0.6]{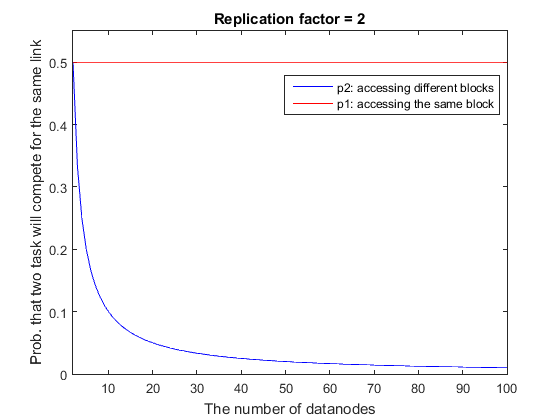}
	\caption{$p_1, p_2$ vs. $n$ when the replication factor is $2$.}
	\label{p1p2_2dn}
\end{figure}


In another experiment to support the above analysis, a small HDFS cluster, with $n=4$ datanodes and replication factor $r = 2$, had limited datanode uplink-bandwidth of $64$ Mbps. Thus, the Spark tasks are always bottlenecked by network I/O.
The datanodes are equally distant to Spark's computing nodes so the selection of datanode for one block is random. The results are shown in Fig. \ref{10bw}. As can be observed, the stage completion time increases with the number of
tasks/partitions.

\begin{figure}[ht!]
	\centering
	\includegraphics[scale=0.38, trim={2cm, 0, 0, 0}]{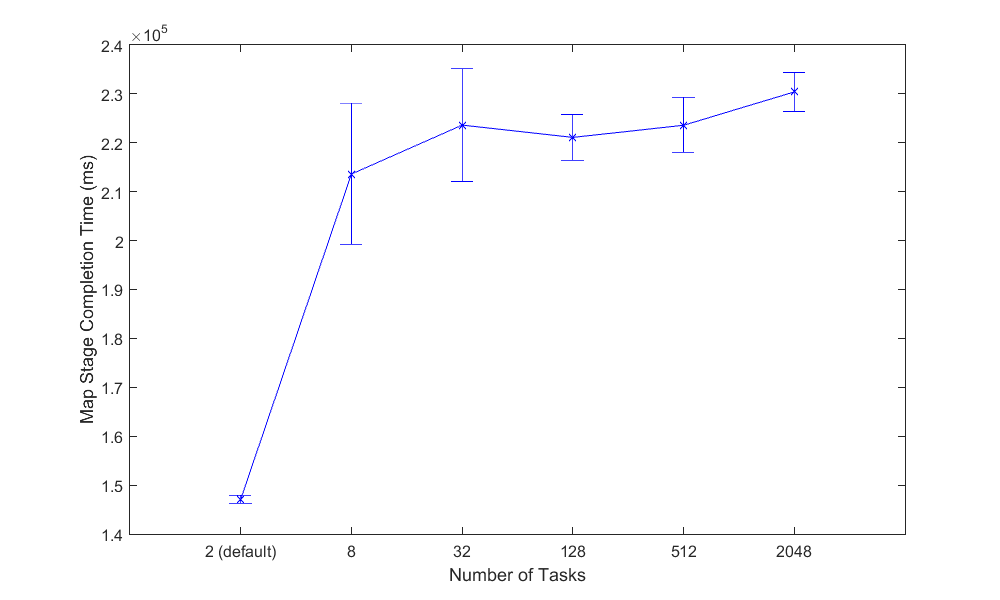}
	\caption{The stage completion time with different partitioning granularity, when network I/O is the universal bottleneck.}
	\label{10bw}
\end{figure}

\section{Heterogeneous Macrotasking (HeMT) - Background}\label{sec:bg-hemt}

To avoid HomT overhead, the number of tasks can be set equal to the
number of available ``computation slots'' (executors). However, in 
case of heterogeneous executors, synchronization delay may ensue
if such ``macrotasks'' are equally sized. This motivates heterogeneous
macrotasking (HeMT).

HeMT will  require a
reasonably accurate estimation of workload (reflected by task execution time) 
which can be easily obtained for many modern jobs due to their repetitive nature;
e.g.,  many production workloads \cite{workload-enterprise-dc, workload-windows-servers, fb-data-warehouse}, and many machine-learning related jobs such EM  and K-Means \cite{duda-hart-stork-01} that consist of multiple iterations of the same computational complexity. Much recent work on task scheduling, e.g., \cite{tr-spark}, is based on such an assumption.
	
We implemented this HeMT partitioning algorithm on 
Spark and compared
 it with Spark's default partitioning scheme in the following,
as well as the aforementioned HomT.
 Spark's default partitioning does not consider any resource heterogeneity of the cluster - it divides the input data regardless of the speed of computing nodes - and 
Spark tends to evenly divide on-memory data into as many partitions as the number of computing slots (usually processing cores). For files located on disk, e.g. HDFS files, baseline Spark, like Hadoop, assigns one file block to a task. Spark naturally supports HomT: users can specify a desired number of partitions and Spark would evenly divide data according to this number.

The aim the of experimental studies described in the following
is to illustrate the benefits and challenges of HeMT.
We implemented HeMT in Spark \cite{spark-code}  using
information from middleware 
(here, Apache Mesos cluster manager \cite{mesos-code}) 
or directly from monitoring services (e.g., AWS CloudWatch).
For scalability, the application frameworks perform most elements of
(workload specific)
HeMT learning, while the middleware scheduler may only perform more
sophisticated workload scheduling (consolidation)\footnote{Analogies 
can be made with ``end-to-end" approaches
such as exokernels or TCP congestion control in the Internet.}.
The information exchange in our Spark-Mesos prototype is summarized in 
Figure \ref{fig:APIs}.

\begin{figure}[ht!]
	\centering
        \includegraphics[width=\columnwidth]{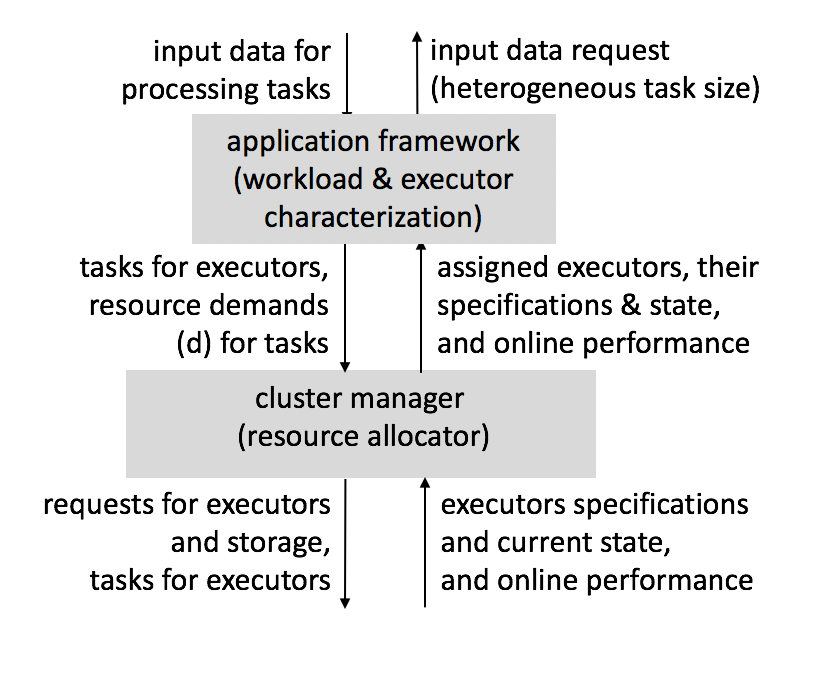}
        \caption{Overview of proposed modifications to application
frameworks and cluster  manager.}
	\label{fig:APIs}
\end{figure}

\section{Oblivious Adapted HeMT (OA-HeMT)}\label{sec:adapted-hemt}

In some environments, e.g., those without resource isolation
leading to significant interprocess 
interference, determining the true workload processing power 
of available computational nodes may be challenging. 
So, a simple ``oblivious'' approach is needed to allow 
application frameworks and cluster managers to dynamically estimate the processing speed of available computational nodes according to the previous workloads of the same job, so that the future tasking 
can be well balanced.

A Spark-Mesos prototype was implemented to enable such an oblivious
ad-hoc adaptive HeMT. The  Mesos cluster manager 
obtains and passes on to the Spark application framework
estimated executor processing speed
through additional fields in their RPC messaging.
Based on this information and the associated task sizes, 
Spark estimates the execution speed of different available
executors and thereby determines how to partition future
work into well-balanced tasks. 

\subsection{A general approach}

Consider a sequence of datasets of sizes $\{D_k\}$ that need to be
processed in the same way, i.e., the same job applied
to each dataset.
The $k^{\rm th}$ dataset $D_k$ is divided
(by the application framework)
into a number of tasks, one for each executor 
$i\in L_k$ assigned to process the $k^{\rm th}$ dataset $D_k$
(by the cluster manager). These tasks are created by dividing
the dataset $D_k$.

For each executor $i\in L_k$, let $v_i$ be the most recent
estimate of its ``speed" for the job under
consideration. Let $L^o_k\subset L_k$ be the set of executors
that have not before been assigned to this job.
For all $i\in L^o_k$, let $v_i = \overline{v}$ where
$\overline{v}$ is the average $v_j$ for $j\in L_k\backslash L^o_k$
(example other choices could be the minimum or maximum
rather than the average or the average speed over all
executors that have been applied to this job in the past).
Let
\begin{eqnarray*}
V_k & = & \sum_{i\in L_k} v_i 
 ~ = ~ |L_k|\overline{v},
\end{eqnarray*}
where $|L_k|$ is the number of executors assigned to the $k^{\rm th}$ job.
Executor $i\in L_k$ is assigned a dataset of size $d_i = D_k v_i/V_k$.
That is, the faster executor (larger $v_i$)
is assigned to work on a larger dataset (larger $d$).

Let $t_i$ be the execution time of executor $i\in L_k$ on 
the assigned task of size $d_i$ of the $k^{\rm th}$ job.
For all executors $i\in L_k$, their 
speed can be updated according to a simple 
first-order autoregressive estimator
\begin{eqnarray*}
v_i & \leftarrow & (1-\alpha) \frac{d_i}{t_i} + \alpha v_i
\end{eqnarray*}
where forgetting factor $\alpha$ satisfies $0<\alpha<1$.

For the initial  ($k=1$) job, $D_1$ is evenly divided
among the executors $i\in L_1$ and subsequently
$v_i = d_i/t_i$.

The straightforward tradeoff in the choice of $\alpha$ is
that smaller $\alpha$ means that the speed estimate is more
responsive to the latest speed datapoint $d_i/t_i$.
But it's entirely possible that different datasets 
of the {\em same size}, i.e., $d=d'$, will require
different execution times $t\not=t'$ for the
same job type under consideration. Over time, such
variations will be ``averaged out" in the 
executor speed estimates; i.e., each executor will 
experience the same task-difficulty distribution 
``per unit" input data (unless there is some bias
so that  some executors tend to receive more difficult tasks per unit input
data for a given job). This motivates
a forgetting factor $\alpha$ that this not close to zero.

Note that each application framework (different job types)
will need to maintain its own estimates of (workload specific)
executor speeds.

\subsection{An experimental result}

To see the effect of such adaptive workload partitioning, 
we performed an experiment
 with a two-node cluster where each node provides one CPU core. No resource isolation technology was used, so Spark executors could share CPU cycles with other processes. A sequence of fifty Spark WordCount jobs 
were presented through a submission queue. 
We introduced interfering processes \cite{sysbench} on one node at 
two different points in time during the experiment  thus reducing the processing speed of Spark executors on that node. How Spark jobs were adaptively partition to re-balance their workloads is shown in Fig. \ref{workload-rebalance}.

One can see how 
overall job execution times (determined by the slowest task)
increased dramatically but then 
rapidly fell as the task sizes were adapted with 
zero forgetting factor
(here, for a given executor, execution time variation per unit document size -
measured in MBytes - was low). 

\begin{figure}[ht!]
	\centering
	\includegraphics[scale=0.45, trim={0, 0.5cm, 0, 0}]{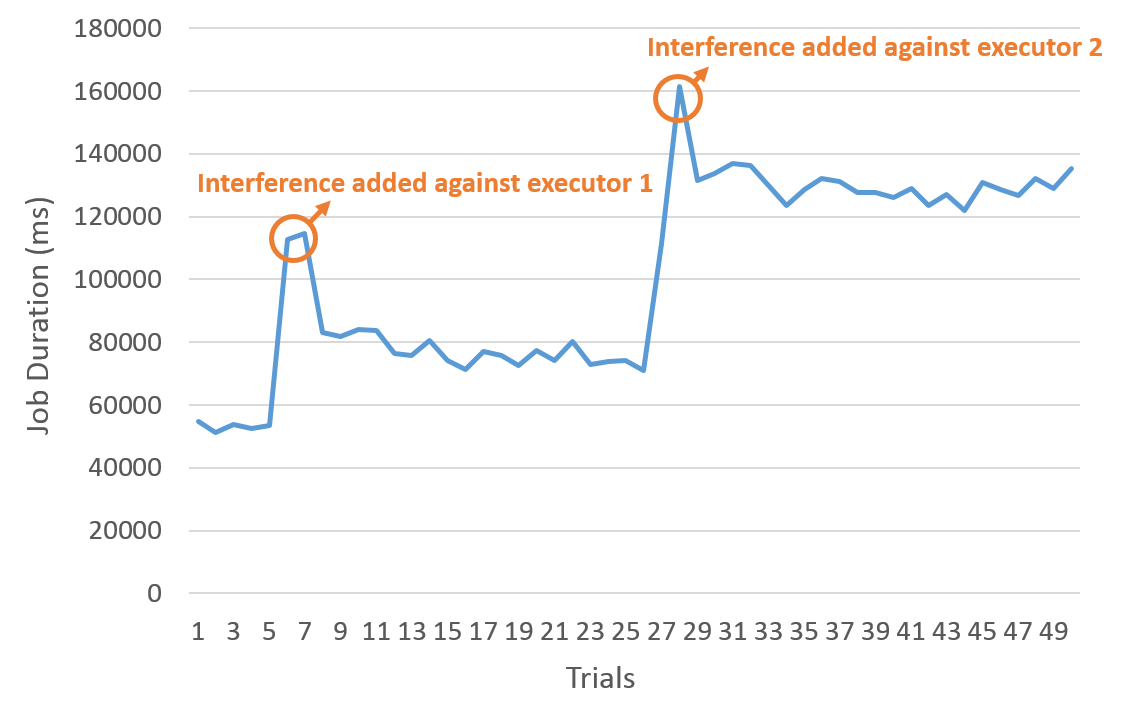}
	\caption{Adaptive workload balancing with introduced
interfering processes at two points in time.}
	\label{workload-rebalance}
\end{figure} 

We performed another experiment involving two hosts being statically
provisioned with one and $0.4$ cores (cf. Sec. \ref{sec:containers}), 
i.e., heterogeneous executors by {\em initial} provisioning.
The results are shown in Fig. \ref{container-rebalance}. Spark learns the optimal way of partitioning the workload after two trials, so the map-stage execution time is reduced to around $60$ seconds, which is in 
agreement  with the results shown in 
Fig. \ref{container04}
where a near optimal data partitioning can be simply derived a priori using resource allocation information provided by Mesos.

\begin{figure}[ht!]
	\centering
	\includegraphics[scale=0.71, trim={0.5cm, 0.5cm, 0, 0}]{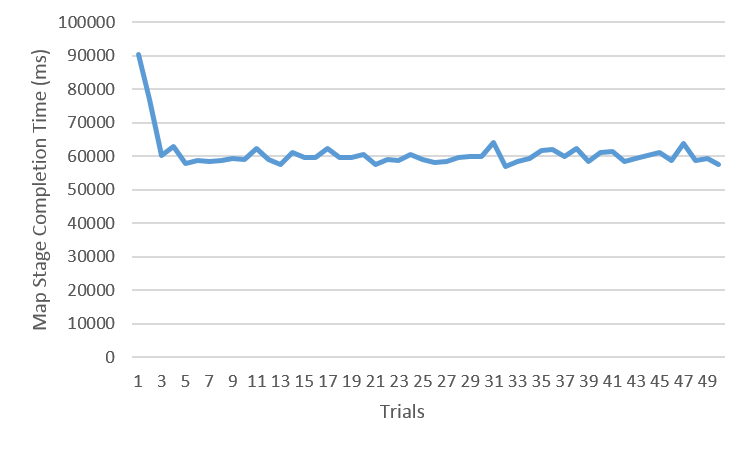}
	\caption{Workload re-balance when executors are different
 by initial provisioning.}
	\label{container-rebalance}
\end{figure} 

This 
online adaptive task sizing can also be applied to the cases in the following Sections \ref{sec:containers} and \ref{sec:burstables}, where the computation capacities of the nodes can be estimated and quantified a priori, and can be further fine-tuned online to achieve better performance.

\section{Heterogeneous MacroTasking (HeMT) with Provisioned Instance Types}\label{sec:HeMT}

In this section, we show how information regarding 
executors (e.g., known resource allocations to executors,
information the service-level agreements)
or some runtime ``state'' of an executor (e.g., token-bucket state) 
can be used to determine initial/baseline heterogeneous task sizes. As
above, task sizes can be further adapted online based on, e.g., execution  time
information received 
from either cluster manager (e.g., Mesos) or monitoring services 
(e.g., AWS CloudWatch).

\subsection{HeMT for Statically Provisioned Containers}\label{sec:containers}

We now compare HeMT with HomT through 
experiments with heterogeneous executors each assigned a different
fraction of a core.
Our implementation supports flexible CPU usage limitation by using containers. 
Baseline Spark does not support partial CPU usage. So we modified the Spark driver to be able to: accept a Mesos offer with partial CPU core; launch an executor using the resources in the offer; and record the actual resources available to this executor so that the driver can use this information to rebalance the workload. 
Additionally, if Spark's executor is spawned on the container with a partial CPU core, we let Spark's executor believe that it has one full core so that it is able to communicate with the driver asking for a task.

To evaluate the performance of HeMT with containerized and statically allocated resources, we did a set of Spark experiments on Mesos. The network bandwidth is large enough ($\sim  600$ Mbps) so that CPU is the only bottleneck. In those experiments, we submitted Spark WordCount jobs with different tasking configurations to a Mesos cluster. WordCount is a simple two-stage Spark job in which most computations are done in the first map stage, so we can determine the effect of load-balancing by observing the execution times of the first stage. Our jobs processed 2GB of data residing on a remote HDFS cluster. For each Spark job, we assign two executors to run associated tasks: one with one full core, the other with partial core \footnote{Spark by default would create one map task for each HDFS block. So to make our experiments start with two tasks, we increase the HDFS block size from 128 MB to 1 GB. We use the same HDFS configuration in Sec. \ref{sec:burstables}. For experiments with usual HDFS configuration, see Sec. \ref{sec:multistage}.}. We used Mesos-supported CFS (complete fair scheduler) bandwidth control \cite{cfs-bandwidth} to limit the CPU usage of the containers.

An example experimental result is shown in 
Fig. \ref{container04}.
As we can see from the red beams, HeMT has good performance since it informs Spark of the CPU allocations and our modified 
Spark balances its workload accordingly.
The U-shaped homogeneous tasking curve is similar to the
results of 
\cite{Ousterhout13}: When tasks sizes are too large (tasks too few in number)
there are sychronization delays. When task sizes are too small
(tasks are too large in number) there is microtasking overhead.
Though \cite{Ousterhout13} gives a rule of thumb,
an optimal homogeneous task size also needs to be learned.

\begin{figure}[ht!]
	\centering
	\includegraphics[scale=0.38, trim={2cm, 0, 0, 0}]{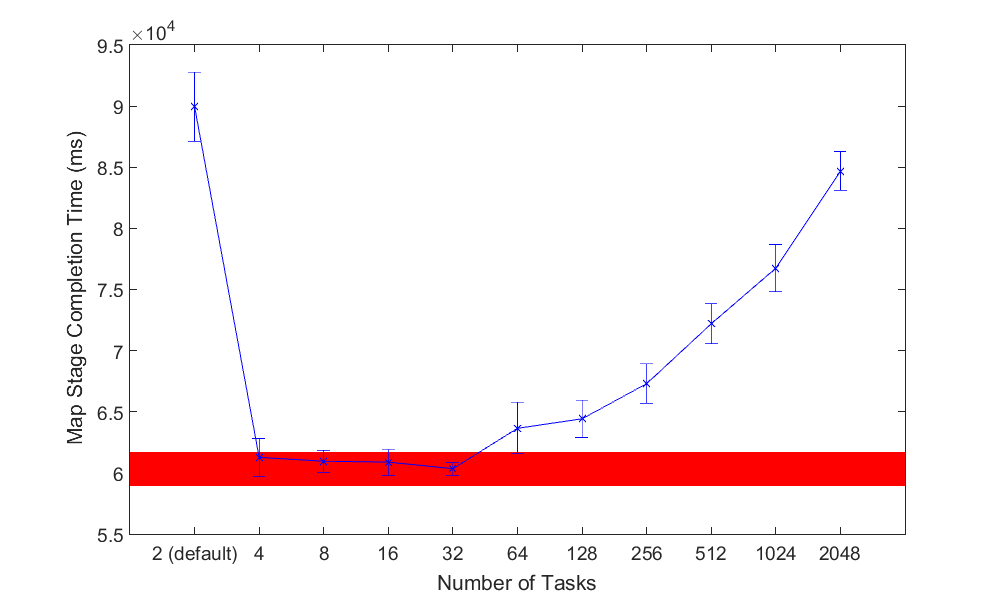}
	\caption{HeMT 
		(red beam for one-$\sigma$ confidence)
		and even partitioning (including HomT and the default 2-way even partitioning). 
		One executor was assigned with $0.4$ CPU and the other a full CPU.}
	\label{container04}
\end{figure}

\subsection{HeMT for Burstable Instances}\label{sec:burstables}

HeMT can also be adapted to support lower cost\footnote{compared to on-demand instances allocated with maximum resources available to the burstable}, general purpose Amazon Web Services (AWS) burstable instance types (T2). Access to CPU 
and network I/O 
\cite{SIGMETRICS17} 
resources  of those instances are governed by a token-bucket mechanism. The CPU credits are earned and spent at a millisecond-level resolution, and the CPU credits accumulated when CPU(s) are idling can be used for future CPU bursts. 
A CPU credit can be used for one CPU running with $100\%$ utilization for one minute,  and it can be divided down to a millisecond timescale
for a short CPU activity burst.
Different types of T2 burstable instances have different CPU credits earning rate, which is in line with their own baseline performance (the performance when an instance has zero credit), and CPU caps (peaks) at which CPU credits stop accumulating. The existing CPU credits indicated through AWS cloudwatch API provide a basis for HeMT workload skewing\footnote{AWS cloudwatch metrics on the free tier update every 5 minutes. If detailed monitoring is enabled
 at extra cost, then AWS cloudwatch update frequency can reach a maximum of  only once per minute \cite{aws-monitoring}. Therefore, AWS cloudwatch may not be helpful for resource-status reporting in short-lived clusters or those with high dynamism.}.

Given the current amounts of CPU credits and the baseline CPU performance, suppose we are able to estimate the computational workload of our job,
the calculation of the amount of work that a node can process within a certain time, $W(t)$, can be easily evaluated: Suppose a t2.small instance initially has $4$ CPU credits. If its vCPU is continually busy, then its CPU credits will be used up in $4 / (1 - 0.2) = 5$ minutes, and, afterwards, its CPU performance will drop to the $20\%$ baseline. The workload it can process in $10$ minutes can be calculated as
\begin{equation*}
W(10) = 1 \times \frac{4}{1 - 0.2} + 0.2 * (10 - \frac{4}{1-0.2}) = 6,
\end{equation*}
reflecting   the size of shaded area in Fig. \ref{mapped-workload}.
\begin{figure}[ht!]
	\centering
	\includegraphics[scale=0.33]{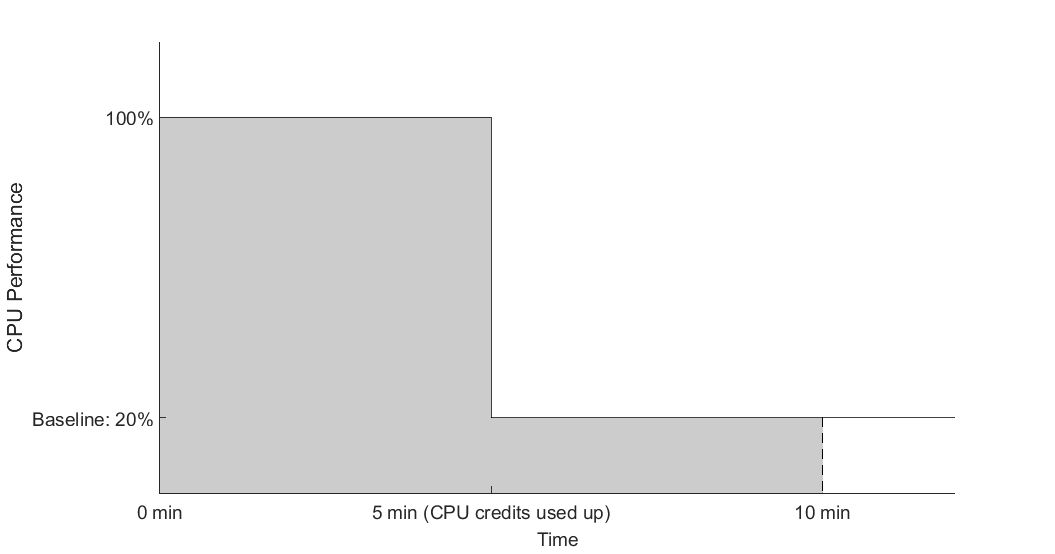}
	\caption{Mapped 10-min workload for a t2.small instance with $4$ initial CPU credits.}
	\label{mapped-workload}
\end{figure}

To divide a workload ($W_0$) to multiple servers with different amount of initial CPU credits so that they can finish at the same time, for each server, we first transform the time-credits plot in Fig. \ref{mapped-workload} into the time-workload plot as shown in Fig. \ref{time-workload}.
\begin{figure}[ht!]
	\centering
	\includegraphics[scale=0.35, trim={2cm, 0, 0, 0}]{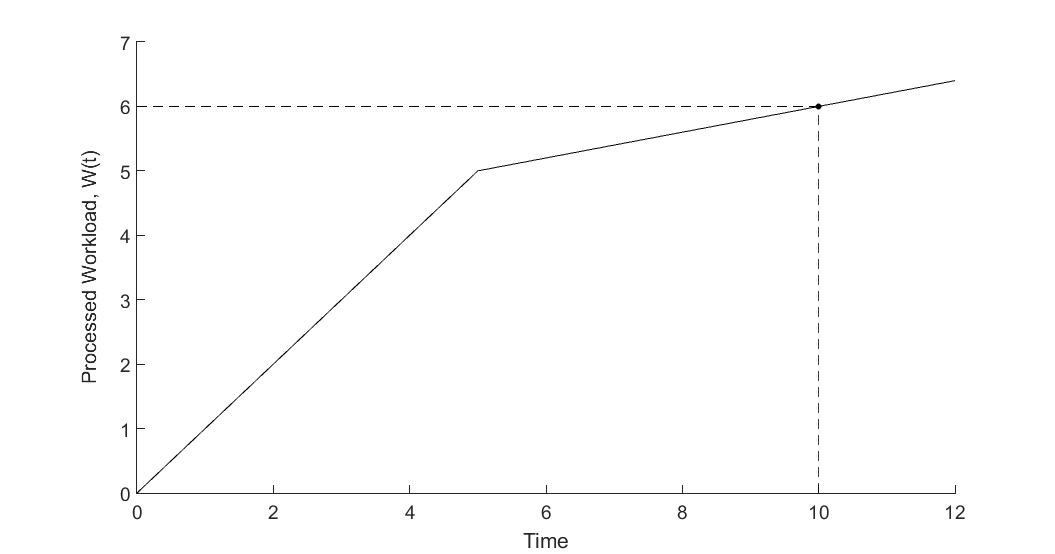}
	\caption{Transformed time v.s. workload plot from Fig. \ref{mapped-workload}.}
	\label{time-workload}
\end{figure}
Superposing such time-workload graphs together into a single piecewise linear function ($\hat{W}(t)$), we can  find $t'$ such that $\hat{W}(t') = W_0$.  
We then divide the workload proportionally according to $\left\lbrace W_i(t') \right\rbrace$, where $i$ is the index of node $i$.

For example, suppose we have three computation nodes with 4, 8, 12 initial CPU credits respectively, and the current data process job requires a CPU running at $100\%$ performance for 20 minutes. We can first superpose the time-workload graphs for these three nodes together as $W_s(t)$, then find $t' = \frac{80}{11}$ such that $W_s(\frac{80}{11}) = 20$, as shown in Fig. \ref{stacked-workload}. Finally we divide the entire workload for the three nodes according to their weights $\left\lbrace W_1(80/11), W_2(80/11), W_3(80/11)\right\rbrace = \{\frac{60}{11}, \frac{80}{11}, \frac{80}{11}\} \propto \{3, 4, 4\}$.
\begin{figure}[ht!]
	\centering
	\includegraphics[scale=0.35, trim={2cm, 0, 0, 0}]{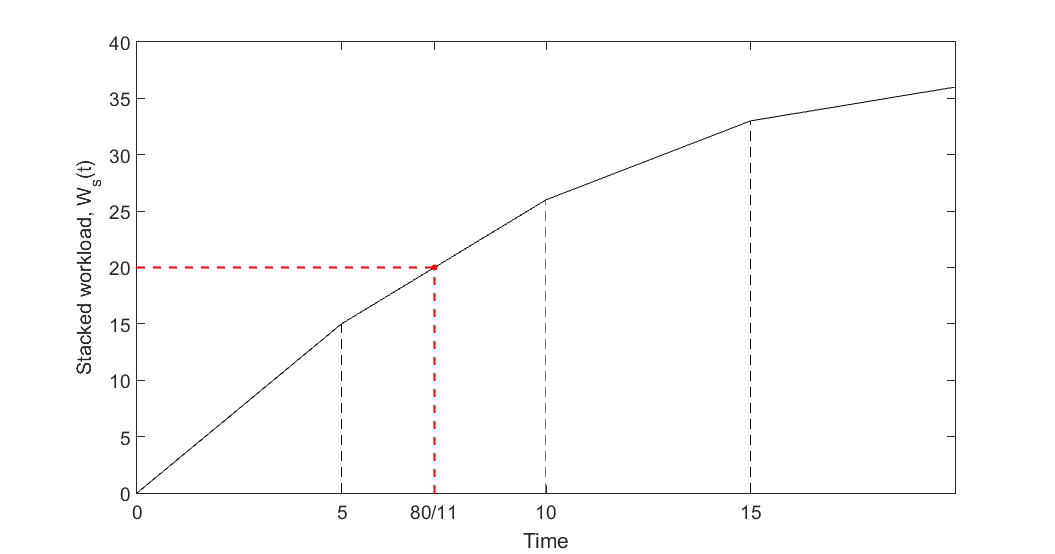}
	\caption{Superposed workload from node 1 to 3.}
	\label{stacked-workload}
\end{figure}

As the CPU credit-monitoring API provided by AWS is not very responsive,
it may not be useful to apply HeMT for short tasks. So, we created the following experimental scenario: First, there are only two types of nodes, one with sufficient CPU credits that will not be depleted throughout the entire life-span of a job, and the other with zero CPU credits\footnote{We can only ensure zero CPU credits on a particular node when starting the job subject to the AWS monitor's update latency.}, where 
\textit{sysbench} \cite{sysbench} was used to deplete CPU credits.
That is, to create heterogeneity, we made depleted one node's CPU credits so 
its CPU works with baseline performance 
(40\% of CPU for AWS t2.medium instance). 
We let our map tasks fetch $2$ GB input data from HDFS. 
Again to make our experiment start with two tasks, we set the HDFS block size to $1$GB, so each task would  process $1$GB data (one HDFS data block) in Spark's default setting. 

Figs. \ref{100bw}-\ref{40bw}
shows the completion time of the map stage under different configurations. 
This set of experiments was done in a small Spark cluster with two executors, each with one core (on two separate AWS burstable instances). 
The tasks read input from a remote HDFS cluster consisting
of four datanodes, each an AWS t2.small instances.
In these experiment, the network bandwidth is large enough ($\sim  480,600$ Mbps) so that CPU is the only bottleneck.

\begin{figure}[ht!]
\centering
\includegraphics[scale=0.38, trim={2cm, 0, 0, 0}]{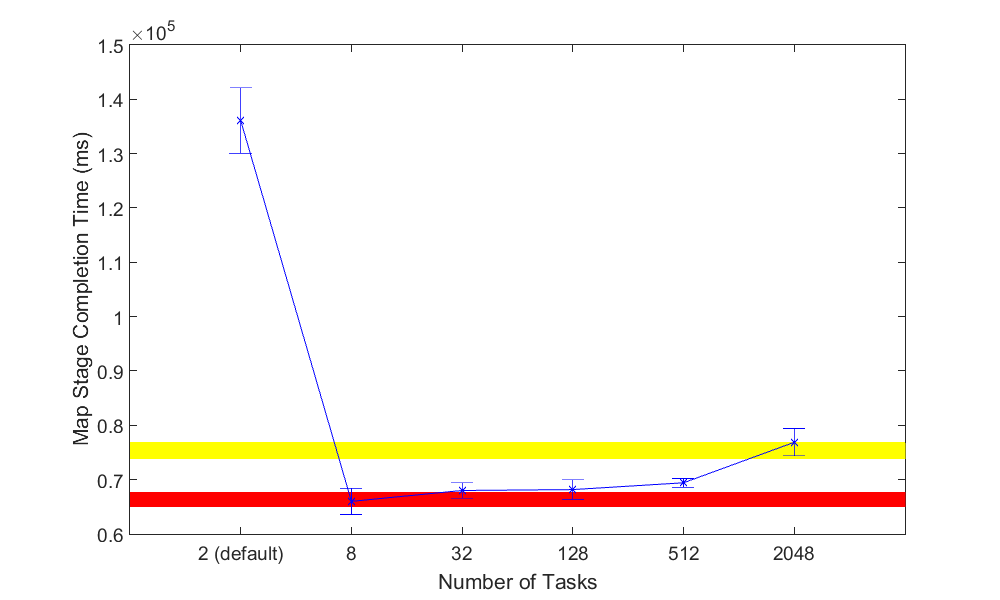}
\caption{Micro-tasking (HomT) vs. macrotasking  (HeMT), 
where HeMT yellow beam is without fudge factor and red beam 
is with fudge factor,
when CPU is the only bottleneck.
}
\label{100bw}
\end{figure}

We also implemented the HeMT approach, i.e., keeping the number of tasks unchanged, but skewing their input data size according to the speed of node on which they are assigned. 
The yellow beams of 
Figs. \ref{100bw} -\ref{40bw}
shows one-$\sigma$ confidence interval of a {\em naive} implementation of workload skewing (HeMT), where we partition the data strictly according to CPU peak and baseline performance (1:0.4 for one core on AWS t2.medium). That is, the task running on the faster node gets $1/1.4$ of the total data, while the other gets the rest. However, we found that the node with zero CPU credit runs even slower than $40\%$ of peak speed. We suspect that this is because the task running with baseline performance is likely facing a higher degree of CPU cache and TLB contention than the other task (e.g., it is possible that the first task is sharing a physical CPU with one or more other workloads while the second task has an entire CPU to itself). Our workload had a significant fraction of memory instructions that are delayed due to such cache/TLB contention. 
By employing short/trial probing tasks,
we found that data partitioning by 1:0.32 further improves load-balancing, i.e.,
this fudge factor is learned from runtime observations (recall
Sec. \ref{sec:adapted-hemt}).
This is shown by the red beam in Fig. \ref{100bw} - the performance of HeMT with this fudge factor improves performance over the best configuration HoMT (8-way) we had tried.

We observed a similar result (with larger variance however) when reduced available bandwidth to $\sim 480$Mbps by using a network traffic shaper, wondershaper \cite{wondershaper}, as shown in Fig. \ref{80bw}. Since CPU remains the bottleneck, we observed thus decreasing network bandwidth does not cause any significant impact on processing speed as expected.

\begin{figure}[ht!]
\centering
\includegraphics[scale=0.38, trim={2cm, 0, 0, 0}]{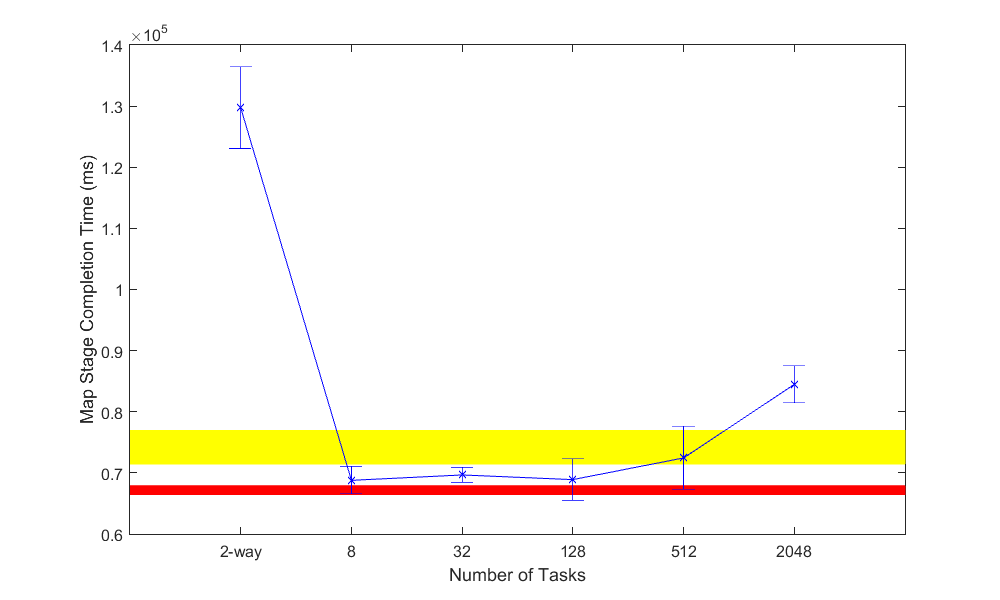}
\caption{Stage completion time when the bandwidth reduced to 
$\sim 480$Mpbs, where CPU is still the only bottleneck.}
\label{80bw}
\end{figure}

We found the behavior of microtasking and macrotasking approaches were different when available uplink bandwidth of a HDFS datanode was reduced to $\sim 250$Mbps, see Fig. \ref{40bw}. 
In this case, for the node with sufficient CPU credits, network I/O becomes the bottleneck, while CPU remains to be the bottleneck for the task running on the node with zero CPU credits. Note that 8-way partitioning is no longer one of the best HomT/microtasking approaches because this relatively coarse-grained partitioning\footnote{where a task running on credit-abundant node would run for about $15$ seconds, while a task running on the other node would run for about $30$ seconds} fails to well-balance the workload in this case. 
It can also be observed that HeMT, including the naive CPU-credit-based partitioning (even it does not make sense in this scenario since the node with sufficient CPU credits is now bottlenecked by the network), started to significantly outperform HomT, because latter is  more likely to incur datanode uplink contention, as explained in Sec. \ref{sec:homt}.

\begin{figure}[ht!]
\centering
\includegraphics[scale=0.38, trim={2cm, 0, 0, 0}]{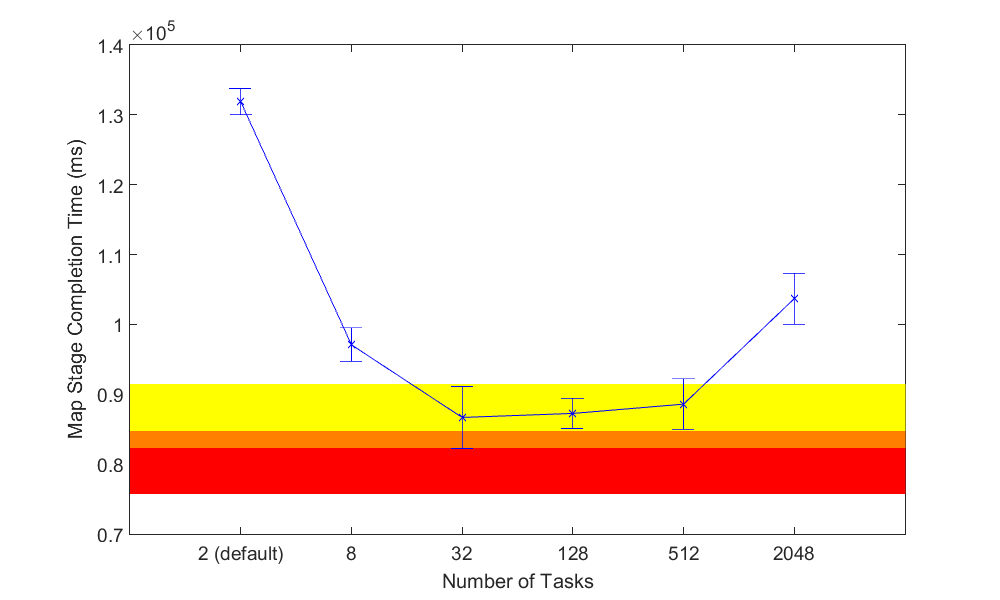}
\caption{Stage completion time when we cut the bandwidth to $\sim 250$Mbps, where the node with sufficient CPU credits becomes bottlenecked by network bandwidth. The orange area represents the overlapping area between the confidence interval of naive CPU credit-based partitioning (yellow) and that of the adjusted partitioning (red).}
\label{40bw}
\end{figure}

\section{HeMT - repartitioning on multiple computation stages}\label{sec:multistage}
 
Our heterogeneous macrotasking can certainly be applied to more realistic workload. A typical MapReduce workload consists of one or more jobs, each job has multiple basic computation stages presented in the previous sections concatenated together through data shuffling. So for the first computation stage, we can simply divide the initial input data according to the computation capacities of the executors. 

A partitioner defines the how a task assigns its intermediate results to different ``buckets'' which will be fetched by different tasks in the following stage respectively. For the following stages, task data are fetched from the intermediate outputs of the tasks in the previous stages. The tasks in the previous stages first shuffle the processed records into different buckets (each corresponding to one fetching task in a future stage) according to a partitioner function, then those buckets are written onto storage media for associated future tasks to fetch. The default hash partitioner shuffles those records into those buckets in a statistically even fashion. So, we need to define a new partitioner that can skew the shuffle buckets for HeMT. We show one implementation of skewing using hash code in Algorithm \ref{skewed_hash}\footnote{Certainly, more sophisticated partitioning algorithm can be made given more information regarding key distribution and processing complexity of each record.}. 

\begin{algorithm}
	\KwData{Record $r$ to be assigned to a bucket; array of executors' computation capacities, $executors$}
	\KwResult{The index of the target bucket}
	$sum$ = $0$\;
	\For{$e$ in $0$ until $executors$.length}{
		$sum$ += $executors$[e]\;
		$executors$[e] = sum\;
	}
	$hash$ = $r$.hashCode mod $executors$.sum\;
	\Return the number of elements in $executors$ greater than or equal to $hash$.
	\caption{Partitioning function of skewed hash partitioner}
	\label{skewed_hash}
\end{algorithm}

The comparison of effective data flows when using the default hash partitioner and our skewed hash partitioner respectively is shown in Fig. \ref{even_vs_skewed}. Relevant idea of balancing workload through partitioner can be found in \cite{lb-mapreduce, adaptive-mapper}.

\begin{figure}[ht!]
	\centering
	\includegraphics[scale=0.19]{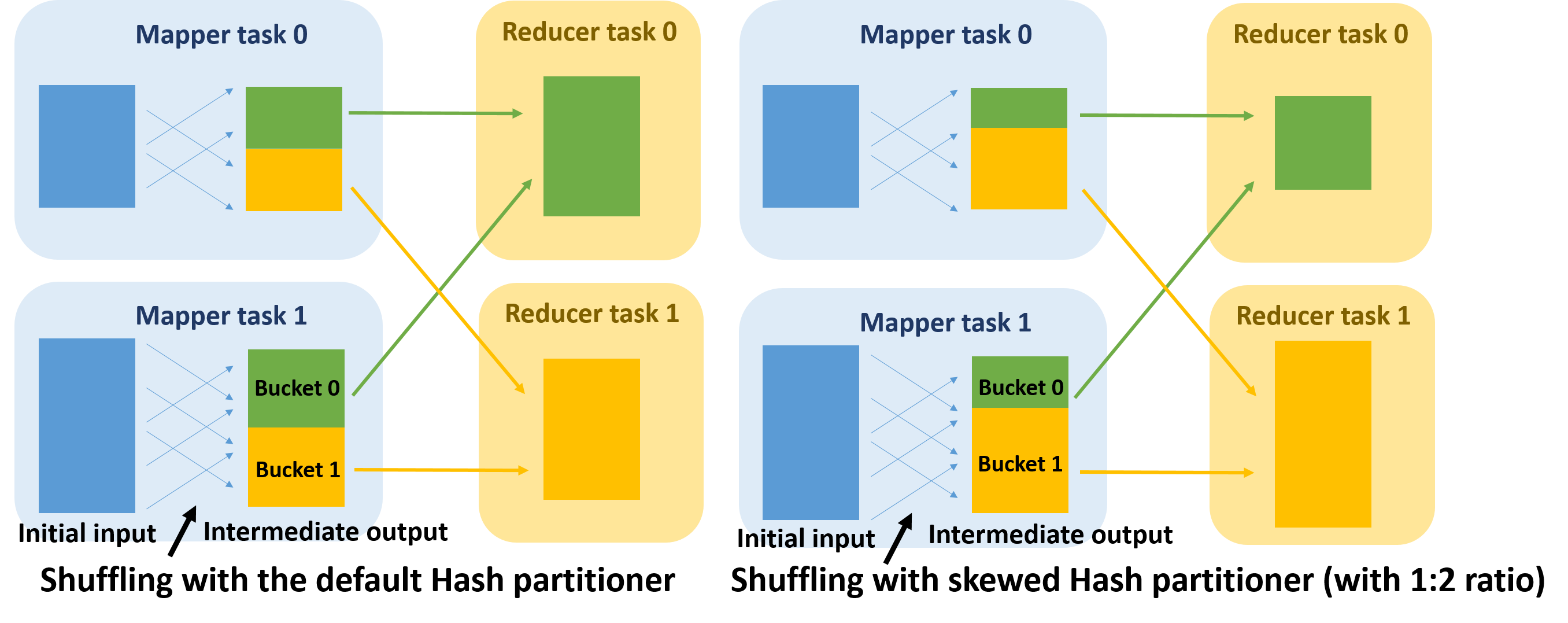}
	\caption{Data flows in even hash shuffling and skewed hash shuffling respectively.}
	\label{even_vs_skewed}
\end{figure}

We present the performance of HeMT using two typical workloads - K-Means and PageRank. Those two have different and representative computation patterns. K-Means consists of repetitive simple two-stage Spark jobs. PageRank, on the other hand, is a single Spark job containing multiple computation stages concatenated together through shuffling.

Again, we run K-Means on the cluster with two executors hosted on two containers, one was allocated with one CPU core, the other was allocated with $0.4$ cores.
To make results more consistent, instead of setting a convergence criterion to stop the iterations, we fix the number of iterations to $30$. The input source is 256 MB data file on HDFS, with block size 128 MB (So there are two blocks). The entire job finish times of HeMT and HoMT are shown in Fig. \ref{kmeans}, which is consistent with the single-stage results shown in the previous sections.

\begin{figure}[ht!]
	\centering
	\includegraphics[scale=0.38, trim = {2cm, 0, 0, 0}, clip]{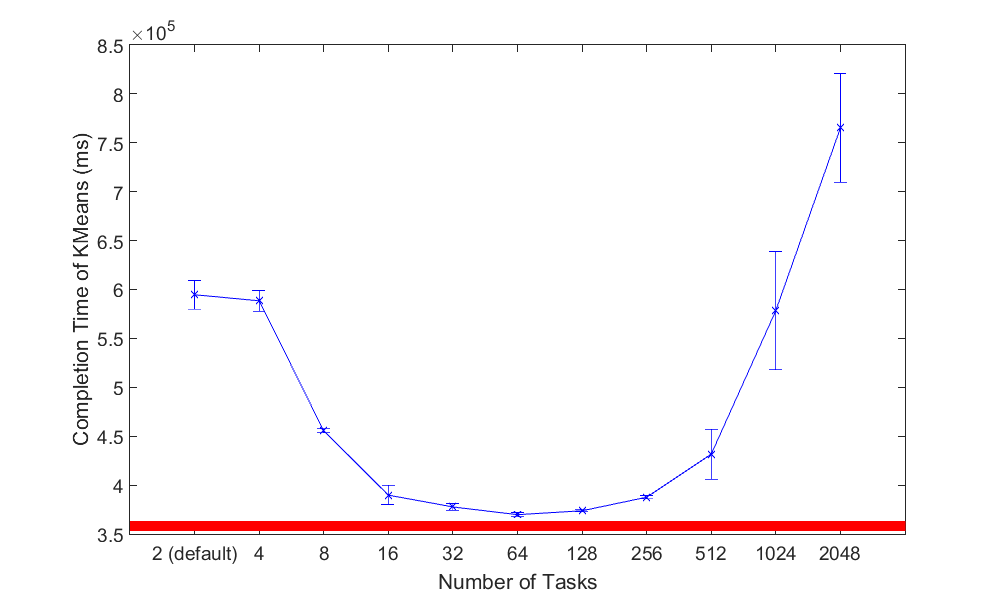}
	\caption{K-Means finish time.}
	\label{kmeans}
\end{figure}

On the same cluster, we run PageRank with $256$ MB input data for $100$ iterations. The results are shown in Fig. \ref{pagerank}. Note that the PageRank, compared with K-Means, is more sensitive to microtasking, because each iteration of PageRank is relatively short (around 10s in the default 2-way partitioning), therefore each task is shorter as well. For example, if we use 64-way partitioning, then each task generally lasts for only $0.1$ - $0.2$ seconds. Therefore, the relative task scheduling overhead would be larger in PageRank workload.
\begin{figure}[ht!]
	\centering
	\includegraphics[scale=0.38, trim = {2cm, 0, 0, 0}, clip]{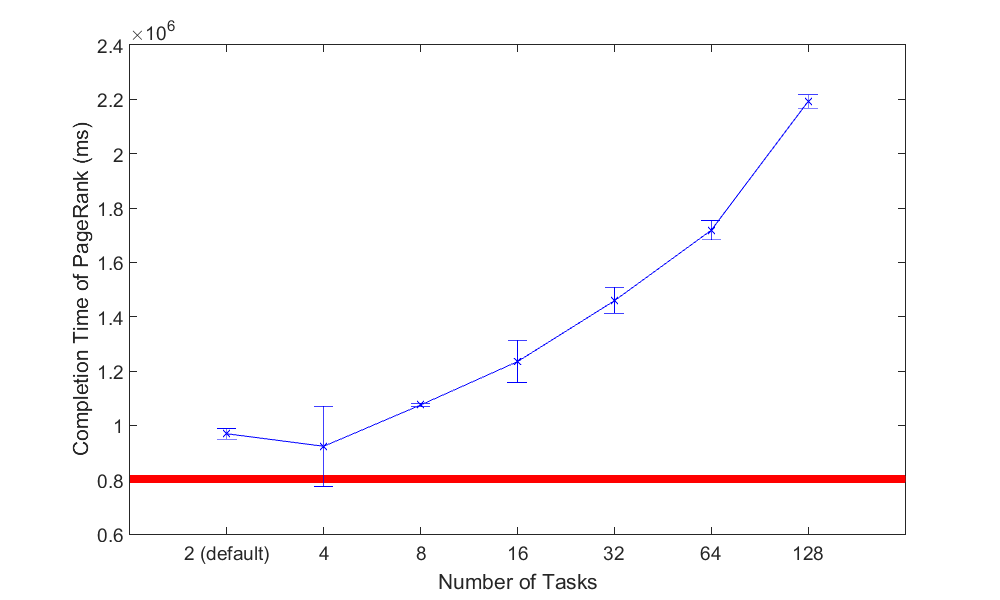}
	\caption{PageRank finish time.}
	\label{pagerank}
\end{figure}

\section{Related Work} \label{sec:related}

We discuss related work, mainly in the context of recent developments in cloud computing, in this section.
Pointers to older work may be found in the references within the papers we discuss here. 

\noindent {\bf Parallel scheduling advances including straggler mitigation: }  Computational skew
(heterogeneous tasking) has been investigated in several studies in order to mitigate the straggler problem.
\cite{adaptive-mapper} focuses on the map stage in a mapreduce job. It breaks the one-to-one mapping between
a mapper task and data segments in Hadoop. By allowing mapper tasks to continue fetching other data segments,
more data ends up being fetched by the faster mappers. In this way, the workload is automatically balanced
especially when it is  finely partitioned. Unlike the classic mapreduce implementations in Hadoop and Spark
where mappers are running independently, this approach requires mappers to be synchronized on read during
runtime, which could affect its scalability.

\cite{skewreduce, skewtune} also target computational skew. They consider a homogeneous cluster where computational skew is the cause of the straggler problem. 
Even data segments of the same size may require different processing time by the same executor.  With good estimates of processing complexity of partitions, the straggling partitions are split in either static \cite{skewreduce} or dynamic \cite{skewtune} fashion.

\cite{loc-based-data-partitioning} showed that enlarging task size by merging multiple co-located block into one task can save task scheduling and initialization times and thereby accelerate the execution of Hadoop.
Some studies and practices such as 
\cite{spark-config,Ousterhout13, mantri, grass} 
take an opportunistic approach by letting the driver employ time-out at program 
barriers to detect
straggler tasks and relaunch them on new executors (speculative execution).

Other studies rely on detailed monitoring and performance predictions to mitigate the straggler problem. For example, \cite{wrangler} predicts a straggler node by applying a support vector machine on features involving resource utilization, thread states and memory statistics. \cite{wrangler} then conservatively 
prevents tasks from being assigned on those nodes that
 are predicted to be stragglers.

 Much prior work has been conducted on  the
representation and prediction of resource capacity demand 
(workload characterization) and supply (executor characterization).
Also, feasible scheduling decisions,
by both application frameworks and the tenant's middleware
cluster managers, have been proposed for more efficient use of
the tenant's available resources, cf. Sec. \ref{sec:related}.
Online learning approaches have also been proposed to this end,
e.g., \cite{Alizadeh18}. Orchestration of budget conscious tenants
using low-cost instances has also been an active area of study, particularly
through the use of preemptible spot and/or
burstable/bursting instances.
Finally, there are several different open-source application frameworks
and cluster managers which can be modified to experiment
with  more advanced learning and scheduling mechanisms.

\noindent {\bf Representation and prediction of resource capacity demand and supply:} Workload prediction for individual applications is an area of extensive research in cloud-like environments, e.g., 
\cite{770691,Hellerstein:2001:SAP:370802.370814,Jalaparti:2012:BTG:2391229.2391239},
as is the related problem of translating workload-specific predictions (e.g., request arrivals) into resource allocations (e.g., servers) to meet desired performance goals, e.g., 
\cite{1687544,sigmetrics05,Doyle:2003:MRP:1251460.1251465,Chandra:2003:DRA:781027.781067,Levy2003,N.Bennani:2005:RAA:1078027.1078472}.

For example, suppose a sequence of roughly equal sized
data batches need to be processed. For each data batch,
suppose there are a certain number of tasks, each of
one of say $K$ different types (and of
a specific size) that have to be executed
in some order (including some in parallel).
At any given point in time, suppose the
``state" of the application is given by
(i) the type of task running in each of its executors and its execution time so far, and (iii) any queued tasks.
A reinforcement learning approach, e.g., \cite{Alizadeh18}, could train
a neural network to 
 map the system state to scheduling decisions (which task to assign to
the next available executor)
in order to minimize overall execution time for the data batch
(to minimize synchronization delays at 
current program barriers/joins in particular).
For several initial data batches, random scheduling decisions could
be  made and their consequences in terms of 
execution time observed and used for training.
Here, the resource requirements of each task to meet certain performance
criteria is not considered.

\noindent
    {\bf Workload admission control and resource allocation:} This has also been extensively
    studied, 
typically assuming some knowledge of (IT) resources demand 
(and their correlations) and supply.
Such problems are related to
multidimensional knapsack, bin-packing and load-balancing problems, e.g.,
 \cite{CK04,PRP10,Varnamkhasti12,CKPT16},
which are NP-hard. They have been extensively studied, including relaxations to
simplified problems that yield approximately optimal
solutions, e.g., by Integer Linear Programs solved by
iterated/online means, and including stochastic versions, e.g.,
\cite{Roytman13,Cohen17}.
 A relevant sub-field, starting with dominant-resource fairness (DRF)~
\cite{DRF},  has been concerned with extending classical single-resource, single server {\bf fair} scheduling to multiple resources and multiple servers as appropriate for highly utilized cloud environments e.g., 
\cite{HUG,Friedman14,BLi16b}.
Most of this scheduling work attempts to incrementally maximize performance while optimally utilizing shared resource between applications/jobs 
\cite{Lu:2015:VLF:2806777.2806943,Curino:2014:RSY:2670979.2670981,Tembey:2014:MAP:2670979.2670993,wrangler,Shafiee:2017:SCD:3078505.3078548, Delimitrou-quasar,Delimitrou:2015:TRS:2806777.2806779}.

\noindent {\bf Orchestration of budget-conscious tenants in the public cloud:} This area has been receiving significant attention, particularly in the form of using Amazon's spot instances (spanning concerns of cost-efficacy, resource availability and optimal bidding),
e.g., 
\cite{sharma2015spotcheck,subramanya2015spoton};
using resource reservations, e.g., 
\cite{yao2014online,Delimitrou16};
exploiting price/performance/availability trade-offs across geo-distributed sites (including multiple cloud markets), e.g., 
\cite{wu2013spanstore,nawab2015minimizing,lucas2013scheduling};
design and analysis of cloud aggregators/brokers and ``derivative'' clouds, e.g.,
\cite{aazam2014broker,liu2014dynamic}. 
Attention has also been given to exploiting the relatively new burstable instances, e.g., \cite{sigmetrics17,EuroSys17}. 

\noindent{\bf Open-source cluster management frameworks: } 
Whereas numerous cluster orchestration frameworks exist as open-source software~
\cite{mesos,kubernetes,Vavilapalli:2013:AHY:2523616.2523633,omar,marathon,swarm}, 
they tend to implement resource management policies suitable for private settings. Public cloud orchestration frameworks are naturally proprietary in nature. In the following, we leverage the existing code-base of Apache Mesos~\cite{mesos} that
employs a default scheduling mechanism   DRF
\cite{DRF} 
adapted to a cluster of servers.

\section{Future Work}\label{sec:fw}

In future work, we will consider
application  frameworks and middleware embodying
more advanced, integrated  online learning frameworks that leverage
information from
offline workload profiling and service-level agreements
 to more precisely 
(online) characterize
workloads' resource needs (demand) and executors' capacity (supply).
Actions by different application frameworks
based on such learning include HeMT at a fast timescale
and determination of preferred types of executors based on
cost/performance tradeoffs.
For a budget-conscious tenant,
we also plan to integrate such actions 
by application frameworks with 
scheduling by the cluster manager (middleware), i.e.,
\cite{DRF} and server-specific alternatives \cite{Jalal17a,Shan18},
the latter improving the efficiency of resource use.
That is, the cluster manager's scheduler would be based on 
online estimates of the resource needs of tasks of its application
frameworks in order to obtain adequate performance
(``fine grain'' mode scheduling).

\section{Concluding Remarks}
\label{sec:conclus}

We investigated the pros and cons of two opposite views on load balancing - homogeneous
microtasking and heterogeneous macrotasking - in large-scale parallel processing workloads
that routinely run on modern public cloud platforms. Using tiny, equal-sized tasks
(homogeneous microtasking, HomT) has long been regarded as an effective way of
load balancing in parallel computing systems. When combined with nodes pulling in work
upon becoming idle,  more powerful nodes finish their work sooner and, therefore, pull in
additional work faster. As a result, HomT is deemed especially desirable in settings with
heterogeneous (and possibly possessing dynamically changing) processing capacities.
However, HomT poses additional scheduling and I/O overheads 
that may make it more costly in some scenarios. In this paper,
we analyzed these advantages and disadvantages of HomT using a combination of analytical
modeling and data-driven experiments on an Apache Spark based prototype (Spark's built-in
scheduler, when parameterized appropriately, already implements HomT.) 
We then proposed an alternative load balancing scheme -  Heterogeneous Macrotasking (HeMT),
wherein workload (input data) was {\em intentionally and carefully} partitioned into tasks of possibly
different sizes. We explored different heuristics for such heterogeneous partitioning
based on the amount of information available about the nodes' processing capacities. 
Our goal was to study when HeMT could overcome the performance disadvantages of HomT.
We implemented a prototype of HeMT within the Apache Spark application framework with
complementary enhancements to the Apache Mesos cluster manager.  Our experimental results
showed that HeMT outperformed HomT when accurate workload-specific estimates of nodes'
processing capacities could be learned. In our experiments,  Spark with HeMT was able to
improve average job completion times 
by about 10\% compared to the default system. 

\section*{Acknowledgements:}
This research was supported in part by NSF CNS 1717571 grant and a Cisco Systems URP gift. 

\bibliographystyle{plain}
\bibliography{./bibfiles/scheduling,./bibfiles/cloud,./bibfiles/opt,./bibfiles/mars,./bibfiles/refs-cloud,./bibfiles/kesidis-prior,./bibfiles/IPM,./bibfiles/ref_cheng,./bibfiles/cloud2,./bibfiles/knapsack,./bibfiles/games,./bibfiles/ratio,./bibfiles/pricing-cloud,./bibfiles/proposal-csr,./bibfiles/eurosys19}

\def\cprime{$'$}
\begin{thebibliography}{10}

\bibitem{sysbench}
{A. Kopytov}.
\newblock Sysbench.
\newblock \url{https://github.com/akopytov/sysbench}, 2016.

\bibitem{aazam2014broker}
M.~Aazam and E.-N. Huh.
\newblock {Broker as a Service (BaaS) pricing and resource estimation model}.
\newblock In {\em Proc. IEEE CloudCom}, 2014.

\bibitem{raas}
Orna Agmon Ben-Yehuda, Muli Ben-Yehuda, Assaf Schuster, and Dan Tsafrir.
\newblock The resource-as-a-service (raas) cloud.
\newblock In {\em Proceedings of the 4th USENIX Conference on Hot Topics in
  Cloud Ccomputing}, HotCloud'12, pages 12--12, Berkeley, CA, USA, 2012. USENIX
  Association.

\bibitem{aws-monitoring}
{Amazon Web Services}.
\newblock Amazon cloudwatch pricing.
\newblock https://aws.amazon.com/cloudwatch/pricing/.
\newblock [Online; accessed 21-Jun-2018].

\bibitem{grass}
Ganesh Ananthanarayanan, Michael Chien-Chun Hung, Xiaoqi Ren, Ion Stoica, Adam
  Wierman, and Minlan Yu.
\newblock Grass: Trimming stragglers in approximation analytics.
\newblock In {\em Proceedings of the 11th USENIX Conference on Networked
  Systems Design and Implementation}, NSDI'14, pages 289--302, Berkeley, CA,
  USA, 2014. USENIX Association.

\bibitem{mantri}
Ganesh Ananthanarayanan, Srikanth Kandula, Albert Greenberg, Ion Stoica, Yi~Lu,
  Bikas Saha, and Edward Harris.
\newblock Reining in the outliers in map-reduce clusters using mantri.
\newblock In {\em Proceedings of the 9th USENIX Conference on Operating Systems
  Design and Implementation}, OSDI'10, pages 265--278, Berkeley, CA, USA, 2010.
  USENIX Association.

\bibitem{hdfs-design}
{Apache Hadoop}.
\newblock Hdfs design.
\newblock
  https://hadoop.apache.org/docs/current/hadoop-project-dist/hadoop-hdfs/HdfsDesign.html.
\newblock [Online; accessed 20-Jul-2018].

\bibitem{hdfs-rack-awareness}
{Apache Hadoop}.
\newblock Hdfs rack awareness.
\newblock
  https://hadoop.apache.org/docs/current/hadoop-project-dist/hadoop-common/RackAwareness.html.
\newblock [Online; accessed 28-Sep-2018].

\bibitem{N.Bennani:2005:RAA:1078027.1078472}
M.N. Bennani and D.A. Menasce.
\newblock Resource allocation for autonomic data centers using analytic
  performance models.
\newblock In {\em Proc. IEEE ICAC}, 2005.

\bibitem{Chandra:2003:DRA:781027.781067}
A.~Chandra, W.~Gong, and P.~Shenoy.
\newblock Dynamic resource allocation for shared data centers using online
  measurements.
\newblock In {\em Proc.ACM SIGMETRICS}, 2003.

\bibitem{CK04}
C.~Chekuri and S.~Khanna.
\newblock On multi-dimensional packing problems.
\newblock {\em SIAM Journal of Computing}, 33(4):837--851, 2004.

\bibitem{HUG}
M.~Chowdhury, Z.~Liu, A.~Ghodsi, and I.~Stoica.
\newblock {HUG: Multi-resource fairness for correlated and elastic demands}.
\newblock In {\em Proc. USENIX NSDI}, March 2016.

\bibitem{CKPT16}
H.I. Christensen, A.~Khan, S.~Pokutta, and P.~Tetali.
\newblock {Multidimensional Bin Packing and Other Related Problems: A Survey}.
\newblock https://people.math.gatech.edu/$\sim$tetali/PUBLIS/CKPT.pdf, 2016.

\bibitem{Cohen17}
M.C. Cohen, V.Mirrokni, P.~Keller, and M.~Zadimoghaddam.
\newblock {Overcommitment in Cloud Services Bin packing with Chance
  Constraints}.
\newblock In {\em Proc. ACM SIGMETRICS}, Urbana-Campaign, IL, June 2017.

\bibitem{Curino:2014:RSY:2670979.2670981}
C.~Curino, D.~E. Difallah, C.~Douglas, S.~Krishnan, R.~Ramakrishnan, and
  S.~Rao.
\newblock Reservation-based scheduling: If you're late don't blame us!
\newblock In {\em Proc. SOCC}, 2014.

\bibitem{Delimitrou-quasar}
C.~Delimitrou and C.~Kozyrakis.
\newblock {Quasar: Resource-efficient and QoS-aware Cluster Management}.
\newblock In {\em Proc. ASPLOS}, 2014.

\bibitem{Delimitrou16}
C.~Delimitrou and C.~Kozyrakis.
\newblock {HCloud: Resource-Efficient Provisioning in Shared Cloud Systems}.
\newblock In {\em Proc. ASPLOS}, Atlanta, 2016.

\bibitem{Delimitrou:2015:TRS:2806777.2806779}
C.~Delimitrou, D.~Sanchez, and C.~Kozyrakis.
\newblock {Tarcil: Reconciling Scheduling Speed and Quality in Large Shared
  Clusters}.
\newblock In {\em Proc. ACM SoCC}, 2015.

\bibitem{Doyle:2003:MRP:1251460.1251465}
R.P. Doyle, J.S. Chase, O.M. Asad, W.~Jin, and A.M. Vahdat.
\newblock Model-based resource provisioning in a web service utility.
\newblock In {\em Proc. USENIX USITS}, 2003.

\bibitem{duda-hart-stork-01}
R.O. Duda, P.E. Hart, and D.G. Stork.
\newblock {\em Pattern Classification, 2nd Ed.}
\newblock Wiley, 2001.

\bibitem{lb-mapreduce}
Y.~Fan, W.~Wu, D.~Qian, Y.~Xu, and W.~Wei.
\newblock Load balancing in heterogeneous mapreduce environments.
\newblock In {\em Proc. IEEE HPCC \& EUC}, 2013.

\bibitem{Friedman14}
E.~Friedman, A.~Ghodsi, and C.-A. Psomas.
\newblock Strategyproof allocation of discrete jobs on multiple machines.
\newblock In {\em Proc. ACM Conf. on Economics and Computation}, 2014.

\bibitem{DRF}
A.~Ghodsi, M.~Zaharia, B.~Hindman, A.~Konwinski, S.~Shenker, and I.~Stoica.
\newblock Dominant resource fairness: Fair allocation of multiple resource
  types.
\newblock In {\em Proc. USENIX NSDI}, 2011.

\bibitem{workload-enterprise-dc}
D.~Gmach, J.~Rolia, L.~Cherkasova, and A.~Kemper.
\newblock Workload analysis and demand prediction of enterprise data center
  applications.
\newblock In {\em 2007 IEEE 10th International Symposium on Workload
  Characterization}, pages 171--180, Sept 2007.

\bibitem{770691}
J.~L. Hellerstein, Fan Zhang, and P.~Shahabuddin.
\newblock An approach to predictive detection for service management.
\newblock In {\em Integrated Network Management VI. Distributed Management for
  the Networked Millennium. Proceedings of the Sixth IFIP/IEEE International
  Symposium on Integrated Network Management. (Cat. No.99EX302)}, 1999.

\bibitem{Hellerstein:2001:SAP:370802.370814}
Joseph~L. Hellerstein, Fan Zhang, and Perwez Shahabuddin.
\newblock A statistical approach to predictive detection.
\newblock {\em Comput. Netw.}, 35(1):77--95, 2001.

\bibitem{mesos}
B.~Hindman, A.~Konwinski, M.~Zaharia, A.~Ghodsi, A.D. Joseph, R.~Katz,
  S.~Shenker, and I.~Stoica.
\newblock {Mesos: A Platform for Fine-grained Resource Sharing in the Data
  Center}.
\newblock In {\em Proc. USENIX NSDI}, 2011.

\bibitem{Jalaparti:2012:BTG:2391229.2391239}
V.~Jalaparti, H.~Ballani, P.~Costa, T.~Karagiannis, and A.~Rowstron.
\newblock Bridging the tenant-provider gap in cloud services.
\newblock In {\em Proc. SoCC}, 2012.

\bibitem{workload-windows-servers}
S.~Kavalanekar, B.~Worthington, Qi~Zhang, and V.~Sharda.
\newblock Characterization of storage workload traces from production windows
  servers.
\newblock In {\em 2008 IEEE International Symposium on Workload
  Characterization}, pages 119--128, Sept 2008.

\bibitem{Jalal17a}
J.~Khamse-Ashari, I.~Lambadaris, G.~Kesidis, B.~Urgaonkar, and Y.Q. Zhao.
\newblock {Per-Server Dominant-Share Fairness (PS-DSF): A Multi-Resource Fair
  Allocation Mechanism for Heterogeneous Servers}.
\newblock In {\em Proc. IEEE ICC, Paris}, May 2017.

\bibitem{kubernetes}
Kubernetes.
\newblock {Production-grade Container Orchestration}.
\newblock \url{http://kubernetes.io/}, 2016.

\bibitem{skewreduce}
YongChul Kwon, Magdalena Balazinska, Bill Howe, and Jerome Rolia.
\newblock Skew-resistant parallel processing of feature-extracting scientific
  user-defined functions.
\newblock In {\em Proceedings of the 1st ACM Symposium on Cloud Computing},
  SoCC '10, pages 75--86, New York, NY, USA, 2010. ACM.

\bibitem{skewtune}
YongChul Kwon, Magdalena Balazinska, Bill Howe, and Jerome Rolia.
\newblock Skewtune: Mitigating skew in mapreduce applications.
\newblock In {\em Proceedings of the 2012 ACM SIGMOD International Conference
  on Management of Data}, SIGMOD '12, pages 25--36, New York, NY, USA, 2012.
  ACM.

\bibitem{Levy2003}
R.~Levy, J.~Nagarajarao, G.~Pacifici, M.~Spreitzer, A.~Tantawi, and A.~Youssef.
\newblock Performance management for cluster based web services.
\newblock In Germ{\'a}n Goldszmidt and J{\"u}rgen Sch{\"o}nw{\"a}lder, editors,
  {\em Integrated Network Management VIII: Managing It All}, pages 247--261.
  Springer US, 2003.

\bibitem{cfs-bandwidth}
{Linux Kernel}.
\newblock Cfs bandwidth control.
\newblock https://www.kernel.org/doc/Documentation/scheduler/sched-bwc.txt.
\newblock [Online; accessed 15-Aug-2018].

\bibitem{liu2014dynamic}
K.~Liu, J.~Peng, W.~Liu, P.~Yao, and Z.~Huang.
\newblock Dynamic resource reservation via broker federation in cloud service:
  A fine-grained heuristic-based approach.
\newblock In {\em Proc. IEEE GLOBECOM}, 2014.

\bibitem{Lu:2015:VLF:2806777.2806943}
Hui Lu, Brendan Saltaformaggio, Ramana Kompella, and Dongyan Xu.
\newblock {vFair: Latency-aware Fair Storage Scheduling via per-IO Cost-based
  Differentiation}.
\newblock In {\em Proc. ACM SoCC}, 2015.

\bibitem{lucas2013scheduling}
Jose~Luis Lucas-Simarro, Rafael Moreno-Vozmediano, Ruben~S Montero, and
  Ignacio~M Llorente.
\newblock Scheduling strategies for optimal service deployment across multiple
  clouds.
\newblock {\em Future Generation Computer Systems}, 29(6):1431--1441, 2013.

\bibitem{Alizadeh18}
H.~Mao, M.~Schwarzkopf, S.B. Venkatakrishnan, and M.~Alizadeh.
\newblock Learning graph-based cluster scheduling algorithms.
\newblock In {\em Proc. SysML}, Stanford, CA, USA, Feb. 2018.

\bibitem{marathon}
{Marathon - A Container Orchestration Framework for Mesos and DC/OS}.
\newblock \url{https://mesosphere.github.io/marathon/}, last accessed, Sept.
  2017.

\bibitem{mesos-code}
Mesos multi-scheduler.
\newblock https://github.com/yuquanshan/mesos/tree/multi-scheduler.

\bibitem{Roytman13}
A.~Meyerson, A.~Roytman, and B.~Tagiku.
\newblock Online multidimensional load balancing.
\newblock In {\em Proc. Approximation, Randomization, and Combinatorial
  Optimization Algorithms and Techniques - Springer LNCS vol 8096}, 2013.

\bibitem{nawab2015minimizing}
F.~Nawab, V.~Arora, D.~Agrawal, and A.~El~Abbadi.
\newblock Minimizing commit latency of transactions in geo-replicated data
  stores.
\newblock In {\em Proc. ACM SIGMOD}.

\bibitem{flat-dc}
E.B. Nightingale, J.~Elson, J.~Fan, O.~Hofmann, J.~Howell, and Y.~Suzue.
\newblock Flat datacenter storage.
\newblock In {\em Proc. USENIX OSDI}, Hollywood, CA, 2012.

\bibitem{Ousterhout13}
K.~Ousterhout, A.~Panda, J.~Rosen, S.~Venkataraman, R.~Xin, S.~Ratnasamy,
  S.~Shenker, and I.~Stoica.
\newblock The case for tiny tasks in compute clusters.
\newblock In {\em Proc. USENIX HotOS}, 2013.

\bibitem{PRP10}
J.~Puchinger, G.R. Raidl, and U.~Pferschy.
\newblock The multidimensional knapsack problem: Structure and algorithms.
\newblock {\em INFORMS Journal on Computing}, 22(2):250--265, Spring 2010.

\bibitem{omar}
O.~Sefraoui, M.~Aissaoui, and M.~Eleuldj.
\newblock Openstack: Toward an open-source solution for cloud computing.
\newblock {\em International Journal of Computer Applications}, 55(3), 2012.

\bibitem{Shafiee:2017:SCD:3078505.3078548}
M.~Shafiee and J.~Ghaderi.
\newblock {Scheduling Coflows in Datacenter Networks: Improved Bound for Total
  Weighted Completion Time}.
\newblock In {\em Proc. ACM SIGMETRICS}, 2017.

\bibitem{Shan18}
Y.~Shan, A.~Jain, G.~Kesidis, B.~Urgaonkar, J.~Khamse-Ashari, and
  I.~Lambadaris.
\newblock Scheduling distributed resources in heterogeneous private clouds.
\newblock In {\em Proc. IEEE MASCOTS}, Milwaukee, Sept. 2018.

\bibitem{sharma2015spotcheck}
P.~Sharma, S.~Lee, T.~Guo, D.~Irwin, and P.~Shenoy.
\newblock {Spotcheck: Designing a derivative IAAS cloud on the spot market}.
\newblock In {\em Proc. EuroSys}, 2015.

\bibitem{spark-config}
{Apache Spark - Spark Configuration}.
\newblock https://spark.apache.org/docs/latest/configuration.html.

\bibitem{spark-code}
Spark with resource demand vectors.
\newblock https://github.com/yuquanshan/spark/tree/d-vector.

\bibitem{subramanya2015spoton}
S.~Subramanya, T.~Guo, P.~Sharma, D.~Irwin, and P.~Shenoy.
\newblock {SpotOn: A batch computing service for the spot market}.
\newblock In {\em Proc. ACM Symp. on Cloud Computing}, pages 329--341, 2015.

\bibitem{swarm}
{Docker Swarm}.
\newblock \url{https://docs.docker.com/engine/swarm/}, last accessed, Sept.
  2017.

\bibitem{Tembey:2014:MAP:2670979.2670993}
P.~Tembey, A.~Gavrilovska, and K.~Schwan.
\newblock Merlin: Application- and platform-aware resource allocation in
  consolidated server systems.
\newblock In {\em Proc. ACM SOCC}, 2014.

\bibitem{fb-data-warehouse}
Ashish Thusoo, Zheng Shao, Suresh Anthony, Dhruba Borthakur, Namit Jain,
  Joydeep Sen~Sarma, Raghotham Murthy, and Hao Liu.
\newblock Data warehousing and analytics infrastructure at facebook.
\newblock In {\em Proceedings of the 2010 ACM SIGMOD International Conference
  on Management of Data}, SIGMOD '10, pages 1013--1020, New York, NY, USA,
  2010. ACM.

\bibitem{against-tiny}
Ehsan Totoni, Subramanya~R. Dulloor, and Amitabha Roy.
\newblock A case against tiny tasks in iterative analytics.
\newblock In {\em Proceedings of the 16th Workshop on Hot Topics in Operating
  Systems}, HotOS '17, pages 144--149, New York, NY, USA, 2017. ACM.

\bibitem{sigmetrics05}
B.~Urgaonkar, G.~Pacifici, P.~Shenoy, M.~Spreitzer, and A.~Tantawi.
\newblock An analytical model for multi-tier internet services and its
  applications.
\newblock {\em ACM SIGMETRICS Performance Evaluation Review}, 33(1):291--302,
  2005.

\bibitem{Varnamkhasti12}
M.J. Varnamkhasti.
\newblock Overview of the algorithms for solving the multidimensional knapsack
  problems.
\newblock {\em Advanced Studies in Biology}, 4(1):37–47, 2012.

\bibitem{Vavilapalli:2013:AHY:2523616.2523633}
V.K. Vavilapalli, A.C. Murthy, C.~Douglas, S.~Agarwal, M.~Konar, R.~Evans,
  T.~Graves, J.~Lowe, H.~Shah, S.~Seth, B.~Saha, C.~Curino, O.~O'Malley,
  S.~Radia, B.~Reed, and E.~Baldeschwieler.
\newblock {Apache Hadoop YARN: Yet Another Resource Negotiator}.
\newblock In {\em Proc. ACM SOCC}, SOCC, 2013.

\bibitem{adaptive-mapper}
Rares Vernica, Andrey Balmin, Kevin~S. Beyer, and Vuk Ercegovac.
\newblock Adaptive mapreduce using situation-aware mappers.
\newblock In {\em Proceedings of the 15th International Conference on Extending
  Database Technology}, EDBT '12, pages 420--431, New York, NY, USA, 2012. ACM.

\bibitem{EuroSys17}
C.~Wang, B.~Urgaonkar, A.~Gupta, G.~Kesidis, and Q.~Liang.
\newblock Combining spot and on-demand instances for cost effective caching.
\newblock In {\em Proc. ACM EuroSys}, Belgrade, 2017.

\bibitem{sigmetrics17}
C.~Wang, B.~Urgaonkar, N.~Nasiriani, and G.~Kesidis.
\newblock {Using Burstable Instances in the Public Cloud: What, When and How?}
\newblock In {\em Proc. ACM SIGMETRICS, Urbana-Champaign, IL}, June 2017.

\bibitem{loc-based-data-partitioning}
C.~Wang, Q.~Wu, Y.~Tan, W.~Wang, and Q.~Wu.
\newblock Locality based data partitioning in mapreduce.
\newblock In {\em 2013 IEEE 16th International Conference on Computational
  Science and Engineering}, pages 1310--1317, Dec 2013.

\bibitem{BLi16b}
W.~Wang, B.~Li, B.~Liang, and J.~Li.
\newblock Multi-resource fair sharing for datacenter jobs with placement
  constraints.
\newblock In {\em Proc. Supercomputing}, Salt Lake City, Utah, 2016.

\bibitem{wondershaper}
Wondershaper.
\newblock https://github.com/magnific0/wondershaper.

\bibitem{wu2013spanstore}
Z.~Wu, M.~Butkiewicz, D.~Perkins, E.~Katz-Bassett, and H.V. Madhyastha.
\newblock Spanstore: Cost-effective geo-replicated storage spanning multiple
  cloud services.
\newblock In {\em Proc. ACM SOSP}, 2013.

\bibitem{1687544}
W.~Xu, X.~Zhu, S.~Singhal, and Z.~Wang.
\newblock Predictive control for dynamic resource allocation in enterprise data
  centers.
\newblock In {\em Proc. IEEE/IFIP NOMS}, 2006.

\bibitem{wrangler}
Neeraja~J. Yadwadkar, Ganesh Ananthanarayanan, and Randy~H. Katz.
\newblock Wrangler: Predictable and faster jobs using fewer resources.
\newblock In {\em SoCC}, 2014.

\bibitem{tr-spark}
Y.~Yan, Y.~Gao, Y.~Chen, Z.~Guo, B.~Chen, and T.~Moscibroda.
\newblock Tr-spark: Transient computing for big data analytics.
\newblock In {\em Proc ACM SoCC}, 2016.

\bibitem{yao2014online}
M.~Yao and C.~Lin.
\newblock An online mechanism for dynamic instance allocation in reserved
  instance marketplace.
\newblock In {\em Proc. IEEE ICCCN}, 2014.

\bibitem{spark-the-paper}
Matei Zaharia, Mosharaf Chowdhury, Michael~J. Franklin, Scott Shenker, and Ion
  Stoica.
\newblock Spark: Cluster computing with working sets.
\newblock In {\em Proceedings of the 2Nd USENIX Conference on Hot Topics in
  Cloud Computing}, HotCloud'10, pages 10--10, Berkeley, CA, USA, 2010. USENIX
  Association.

\bibitem{riffle}
H.~Zhang, B.~Cho, E.~Seyfe, A.~Ching, and M.J. Freedman.
\newblock {Riffle: Optimized Shuffle Service for Large-scale Data Analytics}.
\newblock In {\em Proc. EuroSys}, 2018.

\end{thebibliography}

\end{document}